\newtheorem{theorem}{Theorem}
\newtheorem{corollary}{Corollary}
\newtheorem{proposition}{Proposition}
\newtheorem{lemma}{Lemma}
\theoremstyle{definition}
\newtheorem{definition}{Definition}
\theoremstyle{remark}
\newcommand{\uu}{{\bm u}}
\newcommand{\vv}{{\bm v}}
\newcommand{\x}{{\bm x}}
\newcommand{\X}{{\bm X}}
\newcommand{\y}{{\bm y}}
\newcommand{\Y}{{\bm Y}}
\newcommand{\z}{{\bm z}}
\title{Algorithms for Approximating Conditionally Optimal Bounds}
\author{George Bissias}
\date{March 2025}
\begin{document}

\maketitle

\section{Preliminaries}

In this document we consider lower bounds on discrete distributions $\mathcal{F}^m$ over support set 
\begin{equation}
    S^m = \left\{ S_{\min} + i\frac{S_{\max}-S_{\min}}{m-1} ~\bigg|~ i \in \mathbb{N}, 0 \leq i \leq m-1 \right \}.     
\end{equation}
Samples of size $n$ are drawn from sample space $\Omega^m = S^m \times \ldots \times S^m$. We further assume that samples are unique when their components are arranged in increasing order. When a statement is true for any $m$, we often drop the superscript from the associated quantities.

Our results concern bounds \emph{consistent} with \emph{preorders} on the samples in $\Omega$. A preorder can be characterized by a set of equivalence classes, combined with a partial order over those classes. In the case where every equivalence class contains a single sample, and the classes are totally ordered, the preorder is also a total order. Learned-Miller~\cite{learnedmiller2025admissibilityboundsmeandiscrete} formalized the notion of consistency for total orders and we extend that theory to preorders. 

\begin{definition}
\label{def:preorder}
    For any given preorder $R$ and samples $\x, \y \in \Omega$, we write $\x \lesssim_R \y$ whenever $\y$ is greater than or equal to $\x$ in $R$. A preorder is characterized by the following properties
    \begin{itemize}
        \item $\x \lesssim_R \x$.
        \item $\x \lesssim_R \y$ and $\y \lesssim_R \bm z$ implies that $\x \lesssim_R \bm z$.
    \end{itemize}
    When $\x \lesssim_R \y$ and $\y \lesssim_R \x$ we write $\x \sim_R \y$. And by $\x <_R \y$ we indicate that $\x \lesssim_R \y$ and $\x \not \sim_R \y$. 
\end{definition}

\begin{definition}
\label{def:partial_order}
    \emph{Partial order} $R$ is a preorder where $\forall \x, \y \in \Omega, \x \sim_R \y$ implies that $\x = \y$. When $\x$ is no greater than $\y$ in partial order $R$ we write $\x \leq_R \y$. We further user $\x <_R \y$ to indicate that $\x \leq_R \y$ and $\x \neq \y$.
\end{definition}

\begin{definition}
\label{def:total_order}
    \emph{Total order} $T$ is a partial order that is \emph{strongly connected}, i.e., $\forall \x, \y \in \Omega, \x \leq_T \y$ or $\y \leq_T \x$.
\end{definition}

\begin{definition}
\label{def:total_preorder}
    \emph{Total preorder} $R$ is a preorder that is strongly connected, i.e. where $\forall \x, \y \in \Omega, \x \lesssim_R \y$ or $\y \lesssim_R \x$. A total preorder is a total order over equivalence classes.
\end{definition}

\begin{proposition}
\label{prop:less_implies_not_greater_equal}
    If $R$ is a preorder, then for any $\x, \y \in \Omega$. We have
    \begin{equation}
        \x <_R \y \Rightarrow \neg (\y \lesssim_R \x).
    \end{equation}
    If additionally $R$ is a total preorder, then we also have
    \begin{equation}
        \neg (\y \lesssim_R \x) \Rightarrow \x <_R \y.
    \end{equation}
\end{proposition}

\begin{proof}
To prove the forward direction, suppose for the sake of contradiction that $\x <_R \y$ and $\y \lesssim_R \x$. Since $\x <_R \y$ we know that $\x \lesssim_R \y$ and $\x \not \sim_R \y$. But since $\x \lesssim_R \y$ and $\y \lesssim_R \x$ we have by Definition~\ref{def:preorder} that $\x \sim_R \y$, which contradicts the earlier observation that $\x \not \sim_R \y$.

To prove the reverse direction, suppose that $R$ is a total preorder and $\neg (\y \lesssim_R \x)$. Since $R$ is a total preorder, we know by Definition~\ref{def:total_preorder} that either $\x \lesssim_R \y$ or $\y \lesssim_R \x$. Therefore, it must be the case that $\x \lesssim_R \y$. Now Definition~\ref{def:preorder} requires that $\x \lesssim_R \y$ \emph{and} $\y \lesssim_T \x$ in order for $\y \sim_R \x$. Therefore $\x \not \sim_R \y$, since $\neg (\y \lesssim_R \x)$. Thus, $\x \lesssim_R \y$ and $\x \not \sim_R \y$ so that $\x <_R \y$ by Definition~\ref{def:preorder}.
\end{proof}

\begin{definition}
\label{def:pre_upper}
    The \emph{upper set} associated with sample $\bm x \in \Omega$ and total preorder $R$ is given by 
    \begin{equation}
        \Omega(\bm x, R) = \{\y \in \Omega: \bm x \lesssim_R \bm y\}.
    \end{equation}
\end{definition}

Note that, according to these definitions, $\forall \bm x, \bm y \in \Omega$, total preorder $R$, and total order $T$, $\bm x \lesssim_R \bm y \Leftrightarrow \bm y \in \Omega(\bm x, R)$ and $\x \leq_T \bm y \Leftrightarrow \bm y \in \Omega(\bm x, T)$.

\begin{definition}
\label{def:sample_ineq}
    For $\bm x, \bm y \in \Omega$ we write $\bm x \leq \bm y$ whenever $\forall i \in [n], x_{(i)} \leq y_{(i)}$. We write $\bm x < \bm y$ whenever $\bm x \neq \bm y \wedge \bm x \leq \bm y$.
\end{definition}

Definition~\ref{def:sample_ineq} implies that $\neg (\bm x \leq \bm y) \Leftrightarrow \exists i \in [n]: x_{(i)} > y_{(i)}$, and $\neg (\bm x < \bm y) \Leftrightarrow (\bm x = \bm y) \vee \neg (\bm x \leq \bm y)$. Furthermore, 
\begin{equation}
    \bm x < \bm y \Rightarrow \exists i \in [n]: x_{(i)} < y_{(i)} \Rightarrow \neg(\bm y \leq \bm x).
\end{equation}

\begin{definition}
    By $\bm S_i^m$ we denote the \emph{homogeneous} sample of length $n$:\\ $(S_i^m, \ldots, S_i^m) \in \Omega^m$ for $S_i^m \in S^m$. We will frequently drop the superscript $m$ from $\bm S^m_i$ when it appears in a statement that is true for any $m$.
\end{definition}

\subsection{Common orders}

\begin{definition}
\label{def:lexi_low}
    The \emph{low} lexicographic order, $T_\ell$, is the total order such that $\forall \bm x, \bm y \in \Omega, \bm x \leq_{T_\ell} \bm y \Leftrightarrow L_1(\bm x, \bm y) = 1$ where $\forall i \in [n]$,
    \begin{equation}
    \label{eq:lexi_low}
        L_i(\bm x, \bm y) = (x_{(i)} < y_{(i)}) \vee (x_{(i)} = y_{(i)}) \wedge L_{i+1}(\bm x, \bm y),
    \end{equation}
    and $L_{n+1}(\bm x, \bm y) = 1$.
\end{definition}

\begin{definition}
    The \emph{high} lexicographic order, $T_h$, is the total order such that $\forall \bm x, \bm y \in \Omega, \bm x \leq_{T_h} \bm y \Leftrightarrow H_n(\bm x, \bm y) = 1$ where $\forall i \in [n]$,
    \begin{equation}
        H_i(\bm x, \bm y) = (x_{(i)} < y_{(i)}) \vee (x_{(i)} = y_{(i)}) \wedge H_{i-1}(\bm x, \bm y),
    \end{equation}
    and $H_0(\bm x, \bm y) = 1$.
\end{definition}

\begin{definition}
\label{def:quantile_order}
    For fixed $i \in \{1, \ldots, n\}$ and any two samples $\x, \y \in \Omega$, the $i$th \emph{quantile preorder} $R_i$ is the total preorder defined by
    \begin{equation}
        \x \lesssim_{R_i} \y \Leftrightarrow x_{(i)} \leq y_{(i)}.
    \end{equation}
\end{definition}
Notice that when $n$ is odd and $i = \lceil n/2 \rceil$, the $i$th quantile is equal to the sample median.

\begin{definition}
\label{def:monotone_prop}
    Preorder $R$ is \emph{weakly monotone} if $\bm x \leq \bm y \Rightarrow \bm x \lesssim_R \bm y$, and it is \emph{strongly monotone} if $\bm x < \bm y \Rightarrow \bm x <_R \bm y$.
\end{definition}

\begin{proposition}
    Every strongly monotone order is also weakly monotone.
\end{proposition}

\begin{proof}
    Suppose that $R$ is a strongly monotone order and take any $\x, \y \in \Omega$ such that $\x \leq \y$. In the case that $\x < \y$ we know that $\x <_R \y$ as well by the strong monotonicity property, which implies that $\x \lesssim_R \y$. Now suppose that $\x = \y$. Then by the reflexive property of preorders, we have that $\x \sim_R \y$, so that $\x \lesssim_R \y$. In either case, $\x \leq \y \Rightarrow \x \lesssim_R \y$.
\end{proof}

\begin{proposition}
    A weakly monotone partial order is also strongly monotone.
\end{proposition}

\begin{proof}
    Let $R$ be an arbitrary weakly monotone partial order. By Definitions~\ref{def:partial_order} and~\ref{def:sample_ineq}  we have that 
    \begin{equation}
        \bm x < \bm y \Rightarrow (\x \leq \bm y) \wedge (\x \neq \y) \Rightarrow (\x \leq_R \bm y) \wedge (\x \neq \y) \Rightarrow (\x <_R \bm y).
    \end{equation}
\end{proof}

\begin{proposition}
    A weakly monotone preorder is strongly monotone iff $\forall \x, \y \in \Omega, \x < \y \Rightarrow \x \not \sim_R \y$.
\end{proposition}

\begin{proof}
    Let $R$ be an arbitrary weakly monotone preorder. It's clear that the conclusion is necessary for strong monotonicity. For suppose that it is false, i.e. there exists $\x, \y \in \Omega$ where $\x < \y$ and $\x \sim_R \y$, then $R$ cannot be strongly monotone since we could not have $\x <_R \y$. 
    
    To show that the conclusion is sufficient for strong monotonicity notice that, since $R$ is weakly monotone by assumption,  we have that 
    \begin{equation}
        \bm x < \bm y \Rightarrow \x \leq \bm y \Rightarrow \x \lesssim_R \bm y.
    \end{equation}
    And also by assumption, $\x \not \sim_R \y$, which implies that $\x <_R \y$. Thus, $\x < \y \Rightarrow \x <_R \y$.
\end{proof}

All monotone orders discussed in this document are strongly monotone. For brevity, we refer to such orders as simply \emph{monotone} going forward. 

\section{Topology}

Our results rely on the topological characteristics of the space of probability distributions supported on $S$. Because we deal exclusively with finite support, we may characterize the set of all distributions having support $S$ by the simplex of vectors $\Delta$, which are those ordered vectors in $[0,1]^m$ having unit L1 norm; for $\uu \in \Delta$, $u_i$ dictates the probability mass assigned to $S_i$. The \emph{natural} topology on this space is the one induced by the Euclidean distance metric: $\forall \uu, \vv \in \Delta, d(\uu, \vv) = \|\uu - \vv\|_2$.

\begin{definition}
\label{def:sets_of_dists}
    For each $\alpha \in [0, 1)$ and upper set $\Omega' \subseteq \Omega$, the \emph{interior likely set} is given by
    \begin{equation}
        \mathcal{G}(\Omega', \alpha) = \{F \in \mathcal{F}: P_F[\Omega'] > \alpha\},
    \end{equation}
    and the \emph{likely set} is given by
    \begin{equation}
        \mathcal{F}(\Omega', \alpha) = \texttt{cl}(\mathcal{G}(\Omega', \alpha)),
    \end{equation}
    where \texttt{cl} denotes the closure of a set.
\end{definition}

Let $\bm s$ denote the ordered vector of support points $S$ and note that for each $\uu \in \Delta$, corresponding to distribution $H_{\uu}$, $E[H_{\uu}] = \uu \cdot \bm s$. Thus, for $\uu, \vv \in \Delta$, corresponding to distributions $H_{\uu}$ and $H_{\vv}$ and such that $d(\uu, \vv) < \delta$, we have that 
    \begin{equation}
    \label{eq:mean_bound}
        \begin{array}{rcl}
        |E[H_{\uu}] - E[H_{\vv}]| &\leq& S_{\max} \left|\sum_{i=1}^m u_i - v_i\right| \\
        &\leq& S_{\max} \| \uu - \vv \|_1 \\
        &\leq& S_{\max} \sqrt{m} \| \uu - \vv\|_2 \\
        &=& d(\uu, \vv) \sqrt{m} S_{\max} \\
        &<& \delta \sqrt{m} S_{\max} \\
        \end{array}.
    \end{equation}

\begin{lemma}
\label{lem:mean_eps_close}
    Fix $\alpha \in [0, 1)$, $\Omega' \subseteq \Omega$, and $\delta > 0$. For each $F \in \mathcal{F}(\Omega', \alpha)$ there exists a $G \in \mathcal{G}(\Omega', \alpha)$ such that $|E[F] - E[G]| < \epsilon$.  
\end{lemma}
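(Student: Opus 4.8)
\emph{Proof plan.} The idea is simply to combine the definition of the closure operator with the Lipschitz-type estimate already recorded in \eqref{eq:mean_bound}; here $\epsilon$ is to be read as the quantity $\delta\sqrt{m}\,S_{\max}$ appearing on the last line of that display. First I would pass to the vector representation: identify each distribution in $\mathcal{F}$ with its probability vector in $\Delta$, so that $\mathcal{F}(\Omega',\alpha) = \texttt{cl}(\mathcal{G}(\Omega',\alpha))$ is literally the topological closure in $\Delta$, taken with respect to the Euclidean metric $d$. If $\mathcal{G}(\Omega',\alpha) = \emptyset$ the claim is vacuous, and if $F \in \mathcal{G}(\Omega',\alpha)$ we may take $G = F$; so assume $F = H_{\uu} \in \mathcal{F}(\Omega',\alpha) \setminus \mathcal{G}(\Omega',\alpha)$.

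Next I would unpack what membership in the closure gives us. Since $\uu \in \texttt{cl}(\mathcal{G}(\Omega',\alpha))$ but $\uu \notin \mathcal{G}(\Omega',\alpha)$, the point $\uu$ is a limit point of $\mathcal{G}(\Omega',\alpha)$, so every open Euclidean ball centered at $\uu$ contains a point of $\mathcal{G}(\Omega',\alpha)$. Applying this with radius $\delta$, there is a vector $\vv$ with $d(\uu,\vv) < \delta$ and $H_{\vv} \in \mathcal{G}(\Omega',\alpha)$; set $G = H_{\vv}$. Finally I would invoke \eqref{eq:mean_bound} verbatim for this pair $\uu,\vv$: because $d(\uu,\vv) < \delta$, the chain of inequalities there yields $|E[F] - E[G]| = |E[H_{\uu}] - E[H_{\vv}]| < \delta\sqrt{m}\,S_{\max} = \epsilon$, as required. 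If one prefers $\epsilon$ to be prescribed in advance rather than derived from $\delta$, it suffices to run the same argument with $\delta$ replaced by $\min\{\delta,\ \epsilon/(\sqrt{m}\,S_{\max})\}$ in the nondegenerate case $S_{\max} > S_{\min}$; when $S_{\max} = S_{\min}$ all distributions on $S$ share the same mean and the statement is immediate.

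I do not expect a real obstacle here: the lemma is essentially the observation that the mean functional $\uu \mapsto \uu \cdot \bm s$ is continuous — indeed Lipschitz, by \eqref{eq:mean_bound} — on $\Delta$, and a continuous functional cannot separate a set from its closure. The only point requiring care is bookkeeping: one must make sure the closure in Definition~\ref{def:sets_of_dists} is the one induced by the Euclidean metric on the ambient simplex, so that ``$F$ is a limit point of $\mathcal{G}(\Omega',\alpha)$'' translates precisely into ``$F$ has a $\delta$-close neighbor in $\mathcal{G}(\Omega',\alpha)$ in the sense of $d$'', which is exactly the hypothesis under which \eqref{eq:mean_bound} was derived.
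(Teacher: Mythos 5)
Your argument is correct and is essentially the paper's own proof: both invoke the definition of closure to produce a $G\in\mathcal{G}(\Omega',\alpha)$ within Euclidean distance $\delta$ of $F$ and then apply the Lipschitz estimate in Inequality~\ref{eq:mean_bound}, differing only in whether $\epsilon$ is derived from $\delta$ or $\delta$ is chosen as $\epsilon/(\sqrt{m}\,S_{\max})$. Your extra care with the degenerate cases and with the mismatch between the fixed $\delta$ and the unquantified $\epsilon$ in the statement is reasonable but not a departure in method.
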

\begin{proof}
    Take any $F \in \mathcal{F}(\Omega', \alpha)$. By the definition of closure  we know that $\forall \delta > 0, \exists G \in \mathcal{G}(\Omega', \alpha): d(G, F) < \delta$. Choose $\delta = \epsilon / (\sqrt{m} S_{\max})$. According to Inequality~\ref{eq:mean_bound}, we have $|E[F] - E[G]| < \epsilon$.
\end{proof}

\begin{lemma}
\label{lem:closure_mean}
    Take any $\alpha \in [0, 1)$, $\mu \in [S_{\min}, S_{\max}]$, and $\Omega' \subseteq \Omega$. Then there exists $F \in \mathcal{F}(\Omega', \alpha)$ such that $E[F] = \mu$ if and only if there exists a sequence of distributions in $\mathcal{G}(\Omega', \alpha)$ that converge in mean to $\mu$.  
\end{lemma}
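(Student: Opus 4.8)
The plan is to prove the two directions separately. The forward direction ($\Rightarrow$) is an immediate consequence of Lemma~\ref{lem:mean_eps_close}, while the reverse direction ($\Leftarrow$) rests on a compactness argument together with the continuity of the mean functional.

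For the forward direction, suppose $F \in \mathcal{F}(\Omega', \alpha)$ satisfies $E[F] = \mu$. For each $k \in \mathbb{N}$, I would apply Lemma~\ref{lem:mean_eps_close} with $\epsilon = 1/k$ to produce $G_k \in \mathcal{G}(\Omega', \alpha)$ with $|E[F] - E[G_k]| < 1/k$. Then $|E[G_k] - \mu| < 1/k \to 0$, so the sequence $(G_k)$ converges in mean to $\mu$, as required.

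For the reverse direction, suppose $(G_k)$ is a sequence in $\mathcal{G}(\Omega', \alpha)$ with $E[G_k] \to \mu$, and identify each distribution with its probability vector in $\Delta \subseteq [0,1]^m$. Since $\Delta$ is closed and bounded in $\mathbb{R}^m$, it is compact, and hence so is its closed subset $\mathcal{F}(\Omega', \alpha) = \texttt{cl}(\mathcal{G}(\Omega', \alpha))$. Because $G_k \in \mathcal{G}(\Omega', \alpha) \subseteq \mathcal{F}(\Omega', \alpha)$ for every $k$, compactness yields a subsequence $(G_{k_j})$ converging in the natural topology to some $F \in \mathcal{F}(\Omega', \alpha)$. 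To finish I would invoke continuity of the mean: Inequality~\ref{eq:mean_bound} gives $|E[H_{\uu}] - E[H_{\vv}]| \leq \sqrt{m}\, S_{\max}\, d(\uu, \vv)$, so $G_{k_j} \to F$ forces $E[G_{k_j}] \to E[F]$; but $(E[G_{k_j}])$ is also a subsequence of the convergent sequence $(E[G_k])$, hence $E[G_{k_j}] \to \mu$. By uniqueness of limits, $E[F] = \mu$.

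I do not anticipate a serious obstacle. The one point that warrants care is arguing that the limit $F$ of the convergent subsequence lies in $\mathcal{F}(\Omega', \alpha)$ rather than merely in $\Delta$; this is exactly where it matters that $\mathcal{F}(\Omega', \alpha)$ is defined as a closure (hence closed) and is contained in the compact set $\Delta$, so that it is itself compact and closed under limits of its own sequences. Everything else is routine bookkeeping with Lemma~\ref{lem:mean_eps_close} and the Lipschitz estimate~\ref{eq:mean_bound}.
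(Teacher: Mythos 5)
Your proof is correct, and while the forward direction is essentially the paper's own argument (the paper unwinds the definition of closure with $\delta = 1/i$ and applies Inequality~\ref{eq:mean_bound}, which is exactly what Lemma~\ref{lem:mean_eps_close} packages for you), your reverse direction takes a genuinely different and in fact tighter route. The paper simply asserts that ``the limit of this sequence'' is a limit point of $\mathcal{G}(\Omega', \alpha)$ and hence lies in the closure $\mathcal{F}(\Omega', \alpha)$ --- an argument that implicitly treats convergence in mean as if it entailed convergence of the distributions themselves in the natural topology on $\Delta$, which it does not (distinct distributions can share a mean, so $E[G_i] \to \mu$ alone gives no convergent sequence of vectors). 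Your argument repairs exactly this: you use compactness of $\Delta$ (hence of the closed subset $\mathcal{F}(\Omega', \alpha)$) to extract a subsequence $G_{k_j} \to F \in \mathcal{F}(\Omega', \alpha)$, and then the Lipschitz estimate of Inequality~\ref{eq:mean_bound} to transfer $E[G_{k_j}] \to \mu$ into $E[F] = \mu$. What the paper's version buys is brevity; what yours buys is an actual construction of the required $F$ without assuming the sequence converges in $\Delta$, and it is the argument one would want if the lemma were scrutinized closely. The only bookkeeping point worth making explicit is the one you already flag: $F$ lands in $\mathcal{F}(\Omega', \alpha)$ because that set is a closure, hence closed under limits of its own sequences.
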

\begin{proof}
    Suppose first that there exists $F \in \mathcal{F}(\Omega', \alpha)$ such that $E[F] = \mu$ and choose any $\epsilon > 0$. 
    By the definition of closure  we know that $\forall i > 0, \exists G_i \in \mathcal{G}(\Omega', \alpha): d(G_i, F) < 1/i$. 
    According to Inequality~\ref{eq:mean_bound}, when $i > \sqrt{m} S_{\max} / \epsilon$, $|E[F] - E[G_i]| < \epsilon$. Thus, allowing $\epsilon$ to approach 0, we can see that the sequence of distributions $\{G_i\}_{i>0}$ converges in mean to $F$.

    Next, suppose that there exists a sequence of distributions $\{G_i\}_{i > 0}, G_i \in \mathcal{G}(\Omega', \alpha)$ converging in mean to $\mu$. It suffices for the limit of this sequence to be in $\mathcal{F}(\Omega', \alpha)$, which is true because $\lim_{i \rightarrow \infty} \{G_i\}_{i > 0}$ is a limit point of $\mathcal{G}(\Omega', \alpha)$, and the closure of any set contains all of its limit points.
\end{proof}

\section{Conditional-Optimality}

\begin{definition}
\label{def:agreement}
    Preorder order $D$ \emph{agrees} with a preorder $R$ if $\forall \x, \y \in \Omega, \x <_R \y \Rightarrow \x <_D \y$. Note that agreement is not necessarily a symmetric property.
\end{definition}

\begin{proposition}
\label{prop:exist_agree}
    For every preorder $R$, there exists a total order $T$ that agrees with $R$.
\end{proposition}

\begin{proof}
    We proceed in two stages by first showing that there exists a partial order that agrees with $R$ and then invoking the Szpilrajn extension theorem~\cite{szpilrajn1930extension} to argue that there exists a total order that agrees with the partial order. 

    Every preorder $R$ is a partial order over equivalence classes. Construct partial order $D$ by maintaining all non-equivalence relations of $R$ and arbitrarily assigning a total order between the elements within each of the equivalence classes of $R$ (that such a total order exists follows from the well-ordering principle). Now take any $\x, \y \in \Omega$ such that $\x <_R \y$. Since $\x \not \sim_R \y$, it's clear by construction that $\x <_D \y$ as well. Thus, partial order $D$ agrees with $R$.  
    
    The Szpilrajn extension theorem~\cite{szpilrajn1930extension} establishes that there must also exist a total order $T$ such that $\forall \x, \y \in \Omega, \x \leq_D \y \Rightarrow \x \leq_T \y$. Fix $\x, \y \in \Omega$ such that $\x <_D \y$ and let $T$ be the total order guaranteed by the theorem. We must then have that $\x \leq_T \y$. And of course $\x \neq \y$ by Definition~\ref{def:partial_order}. Thus, by Definition~\ref{def:total_order}, $\x <_T \y$ as well. Therefore, total order $T$ agrees with $D$.
\end{proof}

\begin{proposition}
\label{prob:agree_implies_upper_subset}
    If total order $T$ agrees with total preorder $R$, then $\forall \x \in \Omega$, $\Omega(\x, T) \subseteq \Omega(\x, R)$.
\end{proposition}

\begin{proof}  
    Let $\x \in \Omega$ be arbitrary and note that by  Definition~\ref{def:agreement}, if total order $T$ agrees with total preorder $R$, then for each $\y \in \Omega$
    \begin{equation}
    \label{eq:contra_agreement}
        \neg(\y <_T \x) \Rightarrow \neg(\y <_R \x).
    \end{equation}
    It follows that
    \begin{equation}
        \begin{array}{rcll}
            \Omega(\x, T) &=& \{\y \in \Omega: \x \leq_T \y\} & \text{Definition~\ref{def:pre_upper}}\\
            &=& \{\y \in \Omega: \neg(\y <_T \x)\} & \text{Proposition~\ref{prop:less_implies_not_greater_equal}} \\
            &\subseteq& \{\y \in \Omega: \neg(\y <_R \x)\} & \text{Implication~\ref{eq:contra_agreement}} \\
            &=& \{\y \in \Omega: \x \lesssim_R \y)\} & \text{Proposition~\ref{prop:less_implies_not_greater_equal}} \\
            &=& \Omega(\x, R) & \text{Definition~\ref{def:pre_upper}}\\
        \end{array}.
    \end{equation}
\end{proof}

\begin{proposition}
\label{prop:agree_implies_upper_equal}
    For every total order $T$ agreeing with total preorder $R$ and any $\x \in \Omega$, there exists a $\y \in \Omega$ such that $\x \sim_R \y$ and $\Omega(\x, R) = \Omega(\y, T)$.
\end{proposition}

\begin{proof}
    Fix $\x \in \Omega$ and take $\y$ such that $\x \sim_R \y$ and $\forall \z \in \Omega : \x \sim_R \z$, $\y \leq_T \z$, i.e. $\y$ is the smallest element according to $T$ that is equivalent to $\x$ according to $R$. Such a $\y$ must exist by the well-ordering principle. By Definition~\ref{def:pre_upper}, $\Omega(\x, R) = \Omega(\y, R)$ so that, invoking Proposition~\ref{prob:agree_implies_upper_subset}, it suffices to show that $\Omega(\y, R) \subseteq \Omega(\y, T)$.
    
    We can partition $\Omega(\y, R)$ such that $\Omega(\y, R) = S_1 \cup S_2$ where $S_1 = \{\z: \y \sim_R \z\}$ and $S_2 = \{\z: \y <_R \z\}$. Now take any $\z \in \Omega(\y, R)$. If $\z = \y$, then it's clear by Definition~\ref{def:pre_upper} that $\z \in \Omega(\y, T)$. If $\z \in S_1 \setminus \{\y\}$, then $\y <_T \z$ by our choice of $\y$, so that, again by Definition~\ref{def:pre_upper}, $\z \in \Omega(\y, T)$. Finally, suppose that $\z \in S_2$. Since $T$ agress with $R$, we have by Definition~\ref{def:agreement} that $\y <_T \z$. Thus, by Definition~\ref{def:pre_upper}, $\z \in \Omega(\y, T)$, and therefore $\Omega(\y, R) \subseteq \Omega(\y, T)$.
\end{proof}

\begin{definition}
\label{def:total_order_consistent}
    A bound $B$ is consistent with total order $T$ if $\forall \x,\y \in \Omega$, 
    \begin{equation}
        \x \leq_T \y \Rightarrow B(\x) \leq B(\y).
    \end{equation}
\end{definition}

\begin{definition}
\label{def:preorder_consistent}
    Bound $B$ is consistent with preorder $R$ if it is consistent with every total order that agrees with $R$.
\end{definition}

\begin{lemma}
\label{lem:pre_consistent}
    If bound $B$ is consistent with preorder $R$ then for any $\x, \y \in \Omega$, $B(\x) \leq B(\y)$ whenever $\x <_R \y$ and $B(\x) = B(\y)$ whenever $\x \sim_R \y$. The converse also holds if $R$ is a total preorder.
\end{lemma}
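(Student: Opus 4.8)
The statement is an "if and only if" about when a bound $B$ is consistent with a preorder $R$. The plan is to prove both directions by relating consistency with $R$ (which by Definition~\ref{def:preorder_consistent} means consistency with every total order agreeing with $R$) to the two pointwise conditions: $B(\x) \leq B(\y)$ when $\x <_R \y$, and $B(\x) = B(\y)$ when $\x \sim_R \y$. I should note at the outset what "$B$ consistent with a total order $T$" must mean here — presumably (following Learned-Miller) that $B$ is monotone along $T$, i.e. $\x \leq_T \y \Rightarrow B(\x) \leq B(\y)$ (together possibly with an admissibility/validity condition, but the monotonicity part is what interacts with the order). I would state this explicitly, citing the earlier reference, since the whole lemma hinges on it.

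**The "if" direction.**

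Assume the two pointwise conditions hold. Let $T$ be any total order agreeing with $R$ and take $\x \leq_T \y$; I want $B(\x) \leq B(\y)$. If $\x = \y$ this is trivial, so assume $\x <_T \y$. Now I use trichotomy of the preorder: exactly one of $\x <_R \y$, $\y <_R \x$, or $\x \sim_R \y$ holds (this follows from the totality property quoted in the excerpt, $(\x \lesssim_R \y) \vee (\y \lesssim_R \x)$, together with the definitions). If $\x <_R \y$, the first condition gives $B(\x) \leq B(\y)$. If $\x \sim_R \y$, the second gives $B(\x) = B(\y)$. The remaining case $\y <_R \x$ is impossible: since $T$ agrees with $R$, $\y <_R \x \Rightarrow \y <_T \x$, contradicting $\x <_T \y$. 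Hence $B(\x) \leq B(\y)$ in all surviving cases, so $B$ is monotone along $T$; since $T$ was arbitrary, $B$ is consistent with $R$.

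**The "only if" direction.**

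Assume $B$ is consistent with $R$. For the inequality condition, suppose $\x <_R \y$. I need a single total order $T$ agreeing with $R$ in which $\x <_T \y$ — then consistency with that $T$ forces $B(\x) \leq B(\y)$. Such a $T$ exists: refine $R$ by totally ordering each equivalence class arbitrarily; any such refinement is a total order agreeing with $R$, and since $\x <_R \y$ means $\x,\y$ lie in distinct classes with $\x$'s class below, we get $\x <_T \y$. For the equality condition, suppose $\x \sim_R \y$ with $\x \neq \y$. Here I build two total orders agreeing with $R$: one, $T_1$, in which $\x <_{T_1} \y$ (order $\x$ before $\y$ within their common class), giving $B(\x) \leq B(\y)$; and another, $T_2$, in which $\y <_{T_2} \x$, giving $B(\y) \leq B(\x)$. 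Together these force $B(\x) = B(\y)$.

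**Main obstacle.**

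The only genuinely delicate point is the claim that every such refinement of $R$ is a total order that *agrees* with $R$ in the sense of Definition~\ref{def:agreement} — I must check that $\x <_R \y \Rightarrow \x <_T \y$ for the constructed $T$, which is immediate from the construction (cross-class comparisons in $T$ inherit the order of classes from $R$), but it's worth spelling out. A secondary point is making sure the notion of "$B$ consistent with $T$" is pinned down precisely enough that "consistent with every agreeing $T$" really does reduce to the monotonicity-along-$T$ statements used above; if the cited definition of consistency carries extra clauses (e.g. a coverage requirement) I should remark that those clauses do not depend on the order and so pass through both directions unchanged. Everything else is routine case analysis using the trichotomy of the preorder already established in the preliminaries.
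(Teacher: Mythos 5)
Your proposal is correct and follows essentially the same route as the paper: the ``if'' direction uses totality of $R$ plus agreement to rule out $\y <_R \x$ and reduce to the two pointwise conditions, and the ``only if'' direction exploits two agreeing total orders that place $\x$ and $\y$ in opposite relative positions within their equivalence class. Your explicit construction of those agreeing orders as refinements of $R$ is slightly more careful than the paper's bare assertion that such orders exist, but it is the same argument in substance.
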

\begin{proof}
    Suppose $B$ is consistent with preorder $R$ and take first any $\x, \y \in \Omega$ where $\x <_R \y$. Definition~\ref{def:preorder_consistent} dictates that $B$ must be consistent with every total order $T$ agreeing with $R$. And Definition~\ref{def:agreement} requires that we must also have $\x <_T \y$, which in turn implies, by Definition~\ref{def:total_order_consistent}, that $B(\x) \leq B(\y)$.
    Next, take $\x, \y \in \Omega$ such that $\x \sim_R \y$. Let $T$ be any total order agreeing with $R$ such that $\x <_T \y$, and let $T'$ be the total order identical to $T$ except that it reverses the order of $\x$ and $\y$. Since $B$ is consistent with $R$, Definition~\ref{def:preorder_consistent} ensures that $B$ is consistent with both $T$ and $T'$. This implies that $B(\x) \leq B(\y)$ and $B(\y) \leq B(\x)$ so that $B(\x) = B(\y)$. 

    Now suppose that $R$ is a total preorder, $B(\x) = B(\y)$ whenever $\x \sim_R \y$, and $B(\x) \leq B(\y)$ whenever $\x <_R \y$ for preorder $R$. Let $T$ be any total order agreeing with $R$. For each $\x, \y \in \Omega$ we have by Proposition~\ref{prop:less_implies_not_greater_equal} and taking the contrapositive of Definition~\ref{def:agreement} that
    \begin{equation}
    \label{eq:total_implies_pre}
        \x \leq_T \y \Rightarrow \neg(\y <_T \x) \Rightarrow \neg(\y <_R \x) \Rightarrow \x \lesssim_R \y.
    \end{equation}
    If $\x <_R \y$, then by assumption $B(\x) \leq B(\y)$. And if $\x \sim_R \y$, we again have by assumption that $B(\x) = B(\y) \Rightarrow B(\x) \leq B(\y)$. Thus, in any case, $\x \lesssim_R \y \Rightarrow B(\x) \leq B(\y)$. It follows then by Implication~\ref{eq:total_implies_pre} that $\x \leq_T \y \Rightarrow B(\x) \leq B(\y)$. Therefore, by Definition~\ref{def:total_order_consistent}, it must be the case that $B$ is consistent with $T$. Since $T$ was an arbitrary total order agreeing with $R$, we have, by Definition~\ref{def:preorder_consistent}, that $B$ is also consistent with $R$.
\end{proof}

\begin{definition}
\label{def:cond_opt_pre}
    Bound $B$ consistent with preorder $R$ is \emph{conditionally optimal} with respect to $R$ if for any other bound $B'$ also consistent with $R$ we have $\forall \x \in \Omega, B'(\x) \leq B(\x)$, and $\exists \y \in \Omega: B'(\y) < B(\y)$. 
\end{definition}

\begin{definition}
\label{def:pessimal_bound}
    Let $\x \in \Omega$ and $\alpha \in [0, 1)$ be arbitrary. The \emph{pessimal bound} with respect to preorder $R$ is given by
    \begin{equation}
    \label{eq:cond_opt}
        B_R^*(\x) = \min \{E[F]: F \in \mathcal{F}(\Omega(\bm x, R), \alpha)\}.
    \end{equation}
\end{definition}

\begin{theorem}[Learned-Miller~\cite{learnedmiller2025admissibilityboundsmeandiscrete}]
\label{thm:conditional_optimality}
    Let $\x \in \Omega$, $\alpha \in [0, 1)$, and total order $T$ be arbitrary. When $\mathcal{F}(\Omega(\bm x, T), \alpha)$ is non-empty, the pessimal bound $B_T^*(\x)$ is conditionally optimal with respect to $T$.
\end{theorem}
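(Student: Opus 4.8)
The plan is to prove the three things that Definition~\ref{def:cond_opt_pre} demands of a conditionally optimal bound: that $B_T^*$ is itself a legitimate competitor (a valid bound that is consistent with $T$), that $B_T^*$ pointwise dominates every valid bound $B$ consistent with $T$, and that the domination is strict somewhere whenever $B \ne B_T^*$. The last point is free: if $B \le B_T^*$ everywhere and $B \ne B_T^*$ then $B(\y) < B_T^*(\y)$ for some $\y$. Throughout I take a \emph{bound} to be a $1-\alpha$ lower confidence bound on the mean, i.e.\ a map $B : \Omega \to \mathbb{R}$ with $P_F[B(\X) \le E[F]] \ge 1-\alpha$ for every $F \in \mathcal{F}$, where $\X$ is an i.i.d.\ size-$n$ sample from $F$; and I assume the non-emptiness hypothesis holds for every likely set relevant below, which it does on the part of $\Omega$ on which $B_T^*$ needs to be compared, since $\x \le_T \y$ forces $\Omega(\y,T) \subseteq \Omega(\x,T)$ and hence non-emptiness of $\mathcal{F}(\Omega(\x,T),\alpha)$ whenever $\mathcal{F}(\Omega(\y,T),\alpha)$ is non-empty.

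First I would show $B_T^*$ is consistent with $T$ and valid. If $\x \le_T \y$ then transitivity gives $\Omega(\y,T) \subseteq \Omega(\x,T)$, so $P_F[\Omega(\y,T)] \le P_F[\Omega(\x,T)]$ for every $F$, so $\mathcal{G}(\Omega(\y,T),\alpha) \subseteq \mathcal{G}(\Omega(\x,T),\alpha)$, and taking closures $\mathcal{F}(\Omega(\y,T),\alpha) \subseteq \mathcal{F}(\Omega(\x,T),\alpha)$; minimizing $E[\cdot]$ over the larger set can only lower the value, so $B_T^*(\x) \le B_T^*(\y)$, which is consistency with $T$ (equivalently, by Lemma~\ref{lem:pre_consistent} applied to $T$ viewed as a preorder with singleton classes). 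For validity, fix $F$, set $\mu = E[F]$, and let $A = \{\x \in \Omega : B_T^*(\x) > \mu\}$. Monotonicity of $B_T^*$ makes $A$ an upper set. If $A = \emptyset$ then $P_F[A] = 0 \le \alpha$; otherwise let $\x_0 = \min_T A$ (which exists since $\Omega$ is finite), so $A = \Omega(\x_0,T)$. Since $B_T^*(\x_0) > \mu$, $F$ cannot lie in $\mathcal{F}(\Omega(\x_0,T),\alpha)$, hence not in $\mathcal{G}(\Omega(\x_0,T),\alpha)$, so $P_F[A] = P_F[\Omega(\x_0,T)] \le \alpha$. Either way $P_F[B_T^*(\X) > E[F]] \le \alpha$, so $B_T^*$ is valid.

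Next I would prove domination. Let $B$ be any valid bound consistent with $T$ and fix $\x$. Every $\y \in \Omega(\x,T)$ satisfies $\x \le_T \y$, hence $B(\x) \le B(\y)$, so $\{\X \in \Omega(\x,T)\} \subseteq \{B(\X) \ge B(\x)\}$ and thus $P_F[\Omega(\x,T)] \le P_F[B(\X) \ge B(\x)]$ for every $F$. Now pick any $F \in \mathcal{G}(\Omega(\x,T),\alpha)$; if $B(\x) > E[F]$ then $\{B(\X) \ge B(\x)\} \subseteq \{B(\X) > E[F]\}$, giving $\alpha < P_F[\Omega(\x,T)] \le P_F[B(\X) > E[F]] \le \alpha$ by validity of $B$, a contradiction. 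So $B(\x) \le E[F]$ for every $F \in \mathcal{G}(\Omega(\x,T),\alpha)$, whence $B(\x) \le \inf\{E[F] : F \in \mathcal{G}(\Omega(\x,T),\alpha)\}$. Finally this infimum equals $B_T^*(\x) = \min\{E[F] : F \in \mathcal{F}(\Omega(\x,T),\alpha)\}$: ``$\le$'' holds since $\mathcal{G} \subseteq \mathcal{F}$, and for ``$\ge$'' let $F^*$ attain the minimum over the closed (hence compact, as a closed subset of $\Delta$) set $\mathcal{F}(\Omega(\x,T),\alpha)$ on which the continuous map $E[\cdot] = \uu \cdot \bm s$ attains its minimum, and apply Lemma~\ref{lem:mean_eps_close} to produce, for every $\epsilon > 0$, a $G \in \mathcal{G}(\Omega(\x,T),\alpha)$ with $E[G] < E[F^*] + \epsilon$. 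Hence $B(\x) \le B_T^*(\x)$ for all $\x$, and with the opening remark the theorem follows.

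I expect the main obstacle to be this last identification: the set $\mathcal{G}(\Omega(\x,T),\alpha)$ is defined by the strict inequality $P_F[\Omega(\x,T)] > \alpha$, so a mean-minimizing distribution need not lie in $\mathcal{G}$ but only in its closure $\mathcal{F}(\Omega(\x,T),\alpha)$ — reconciling the infimum over $\mathcal{G}$ with the attained minimum over $\mathcal{F}$ is exactly what Lemma~\ref{lem:mean_eps_close}, together with compactness of $\Delta$ and continuity of the mean, is needed for. A secondary point requiring care is the probabilistic bookkeeping in the domination step, where consistency of $B$ must first be turned into the inclusion $\{\X \in \Omega(\x,T)\} \subseteq \{B(\X) \ge B(\x)\}$ before the validity inequality can be applied, and where one should keep track of the degenerate case in which the relevant likely set is empty, which the hypothesis of the theorem is there to exclude.
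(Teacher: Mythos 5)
The paper never proves Theorem~\ref{thm:conditional_optimality}: it is imported verbatim from Learned-Miller~\cite{learnedmiller2025admissibilityboundsmeandiscrete}, so there is no in-paper argument to compare yours against. Taken on its own terms, your proof is sound and follows the natural decomposition for this kind of statement: (i) consistency of $B_T^*$ with $T$ via nestedness of upper sets (this essentially re-derives Lemma~\ref{lem:pessimal_consistent} specialized to total orders); (ii) validity of $B_T^*$, by noting that the exceedance event $\{B_T^*(\X) > E[F]\}$ is, when non-empty, of the form $\Omega(\x_0, T)$ with $\x_0$ its $T$-minimum, and that $E[F] < B_T^*(\x_0)$ forces $F \notin \mathcal{G}(\Omega(\x_0,T),\alpha)$, i.e.\ $P_F[\Omega(\x_0,T)] \le \alpha$; (iii) domination, by converting consistency of a competitor $B$ into the inclusion $\{\X \in \Omega(\x,T)\} \subseteq \{B(\X) \ge B(\x)\}$ and playing it against validity of $B$ at an arbitrary $F \in \mathcal{G}(\Omega(\x,T),\alpha)$, then identifying $\inf\{E[G] : G \in \mathcal{G}(\Omega(\x,T),\alpha)\}$ with $\min\{E[F] : F \in \mathcal{F}(\Omega(\x,T),\alpha)\}$ via Lemma~\ref{lem:mean_eps_close} together with compactness of the closed subset of $\Delta$ and continuity (linearity) of the mean. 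Three small remarks: you had to supply your own definitions of ``valid bound'' and of consistency with a total order, which the paper also leaves to the cited reference, and your choices match how those notions are used here (e.g.\ in the proof of Lemma~\ref{lem:pre_consistent}); in the final identification your direction labels are swapped --- the inclusion $\mathcal{G} \subseteq \mathcal{F}$ gives $\inf_{\mathcal{G}} E \ge \min_{\mathcal{F}} E$, while Lemma~\ref{lem:mean_eps_close} gives the reverse inequality --- but both ingredients are present, so nothing substantive is lost; and your handling of where non-emptiness is needed (it propagates down the order since $\x \le_T \y$ implies $\mathcal{G}(\Omega(\y,T),\alpha) \subseteq \mathcal{G}(\Omega(\x,T),\alpha)$) is a fair reading of a theorem statement that is itself loose on this point, and is consistent with the blanket non-emptiness assumption the paper adopts immediately after stating the theorem.
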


For the remainder of this document we will assume that for any preorder $R$ under consideration, $\alpha$ is chosen so that $\mathcal{F}(\Omega(\bm x, R), \alpha)$ is indeed non-empty.

\begin{lemma}
\label{lem:pessimal_consistent}
    For any total preorder $R$, $B_R^*$ is consistent with $R$.
\end{lemma}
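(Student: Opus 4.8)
The plan is to reduce the statement to Lemma~\ref{lem:pre_consistent}, which says that $B_R^*$ is consistent with $R$ exactly when $B_R^*(\x) \leq B_R^*(\y)$ whenever $\x <_R \y$ and $B_R^*(\x) = B_R^*(\y)$ whenever $\x \sim_R \y$. Both facts will follow from a single monotonicity observation: the assignment $\x \mapsto \mathcal{F}(\Omega(\x, R), \alpha)$ is inclusion-reversing in the ``$\lesssim_R$'' order, and hence $B_R^*$, being a minimum of $E[\cdot]$ over that set, is $\lesssim_R$-monotone.

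First I would show that $\x \lesssim_R \y$ implies $\Omega(\y, R) \subseteq \Omega(\x, R)$. If $\z \in \Omega(\y, R)$ then $\y \lesssim_R \z$ by Definition~\ref{def:pre_upper}; combining this with $\x \lesssim_R \y$ and using transitivity of $R$ (valid since $R$ is characterized by a total order over its equivalence classes) yields $\x \lesssim_R \z$, i.e.\ $\z \in \Omega(\x, R)$. Next, from $\Omega(\y, R) \subseteq \Omega(\x, R)$ I would derive $\mathcal{F}(\Omega(\y, R), \alpha) \subseteq \mathcal{F}(\Omega(\x, R), \alpha)$: for every $F$ we have $P_F[\Omega(\y, R)] \leq P_F[\Omega(\x, R)]$, so $\mathcal{G}(\Omega(\y, R), \alpha) \subseteq \mathcal{G}(\Omega(\x, R), \alpha)$ by Definition~\ref{def:sets_of_dists}, and taking closures preserves the inclusion. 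Since both likely sets are non-empty by the standing assumption and the minimum of $E[\cdot]$ over a subset is at least the minimum over the whole set, we conclude $B_R^*(\x) \leq B_R^*(\y)$.

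Finally I would specialize. If $\x <_R \y$ then in particular $\x \lesssim_R \y$, so the inequality above gives $B_R^*(\x) \leq B_R^*(\y)$, the first required condition. If $\x \sim_R \y$ then $\x \lesssim_R \y$ and $\y \lesssim_R \x$, so applying the inclusion in both directions gives $\Omega(\x, R) = \Omega(\y, R)$, hence $\mathcal{F}(\Omega(\x, R), \alpha) = \mathcal{F}(\Omega(\y, R), \alpha)$ and therefore $B_R^*(\x) = B_R^*(\y)$, the second required condition. By Lemma~\ref{lem:pre_consistent}, $B_R^*$ is consistent with $R$.

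The argument is largely routine; the only points requiring a moment's care are the explicit appeal to transitivity of the preorder $R$ (not stated as an axiom but implied by its characterization via equivalence classes) and the monotonicity of the closure operator under set inclusion, together with the standing non-emptiness hypothesis that makes the defining minimum of $B_R^*$ well defined on both $\Omega(\x,R)$ and $\Omega(\y,R)$.
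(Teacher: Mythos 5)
Your proposal is correct and follows essentially the same route as the paper: reduce to Lemma~\ref{lem:pre_consistent}, show that $\x \lesssim_R \y$ implies $\Omega(\y,R) \subseteq \Omega(\x,R)$ (with equality when $\x \sim_R \y$), and conclude from Definition~\ref{def:pessimal_bound}. The only difference is that you spell out the intermediate step $\mathcal{G}(\Omega(\y,R),\alpha) \subseteq \mathcal{G}(\Omega(\x,R),\alpha)$ and the closure monotonicity, which the paper leaves implicit.
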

\begin{proof}
    According to Lemma~\ref{lem:pre_consistent}, it will suffice to show that $B_R^*(\x) \leq B_R^*(\y)$ whenever $\x <_R \y$ and $B_R^*(\x) = B_R^*(\y)$ whenever $\x \sim_R \y$. We can see that the latter is true by noting that Definition~\ref{def:pre_upper} ensures $\Omega(\x, R) = \Omega(\y, R)$ whenever $\x \sim_R \y$ so that $B_R^*(\x) = B_R^*(\y)$ by Definition~\ref{def:pessimal_bound}. To show the former, note that by Definition~\ref{def:pre_upper}, $\x \lesssim_R \y \Rightarrow \Omega(\y, R) \subseteq \Omega(\x, R)$, which by Definition~\ref{def:pessimal_bound} implies that $B_R^*(\x) \leq B_R^*(\y)$.
\end{proof}

\begin{theorem}
\label{thm:preorder_cond_opt}
    For every $\alpha \in [0, 1)$ and total preorder $R$, $B_R^*$ is the conditionally optimal bound for $R$ provided that $\mathcal{F}(\Omega(\x, R), \alpha)$ is non-empty. 
\end{theorem}
\begin{proof}
    We begin by noting that, according to Lemma~\ref{lem:pessimal_consistent}, $B_R^*$ is consistent with $R$.
    Thus, by Definition~\ref{def:cond_opt_pre}, it suffices to show that for any valid bound $B_R$, consistent with $R$, and $\forall \x \in \Omega, B^*_R(\x) \geq B_R(\x)$. This would imply that $B_R$ is either strictly worse than $B_R^*$ for some $\x \in \Omega$, or that $B_R = B_R^*$. To that end, suppose to the contrary that there exists some valid $B_R$, also consistent with $R$, such that $\exists \x \in \Omega: B_R^*(\x) < B_R(\x)$. Now let $T$ be any total order agreeing with $R$ and take $\y \in \Omega$ such that $\y \sim_R \x$ and $\Omega(\y, T) = \Omega(\y, R) = \Omega(\x, R)$. We are guaranteed that such a $T$ exists by Propositions~\ref{prop:exist_agree}, and we are guaranteed that such a $\y$ exists by Proposition~\ref{prop:agree_implies_upper_equal}. 
    Since $T$ agrees with $R$ we know by Definition~\ref{def:preorder_consistent} that $B_R$ is consistent with $T$. Notice also that Theorem~\ref{thm:conditional_optimality} ensures that $B_T^*$ is conditionally optimal for $T$.
    
    Now since $\x \sim_R \y$ and $B_R^*(\x) < B_R(\x)$, it's clear by Lemma~\ref{lem:pre_consistent} that 
    \begin{equation}
        B_R^*(\y) = B_R^*(\x) < B_R(\x) = B_R(\y).
    \end{equation}
    And because $\Omega(\y, T) = \Omega(\y, R)$ we also have by Definition~\ref{def:pessimal_bound} that $ B_T^*(\y) = B_R^*(\y)$. So it must be the case that $B_T^*(\y) < B_R(\y)$ as well, which means that there is a bound consistent with $T$ (the bound $B_R$) that for sample $\y$ improves on the bound given by $B_T^*$. But this contradicts the fact that $B_T^*$ is conditionally optimal for order $T$. Thus, it cannot be the case that $B_R^*(\x) < B_R(\x)$.
\end{proof}

\section{Pointwise-Optimality}

The theory of Learned-Miller~\cite{learnedmiller2025admissibilityboundsmeandiscrete} establishes that there can exist no bound that is optimal for \emph{every} sample in $\Omega$. Nevertheless, it remains possible to derive the highest possible bound value that can be assigned to any given sample by any valid bound. 

\begin{definition}
\label{def:point_opt}
    The \emph{optimal bound} for sample $\x \in \Omega$, denoted $B^*(\x)$, is the highest bound value assigned to $\x$ by any valid bound.
\end{definition}

\begin{lemma}
\label{lem:point_opt}
    For any $\x \in \Omega$,
    \begin{equation}
        B^*(\x) = \min \{E[F]: F \in \mathcal{F}(\{\x\}, \alpha)\}.
    \end{equation}
\end{lemma}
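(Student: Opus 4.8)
I would prove the two inequalities $B^*(\x)\ge \min\{E[F]:F\in\mathcal{F}(\{\x\},\alpha)\}$ and $B^*(\x)\le \min\{E[F]:F\in\mathcal{F}(\{\x\},\alpha)\}$ separately. Write $\mu^* = \min\{E[F]:F\in\mathcal{F}(\{\x\},\alpha)\}$. I would first note that this minimum is genuinely attained: $\mathcal{F}(\{\x\},\alpha)$ is, by Definition~\ref{def:sets_of_dists}, a closed subset of the simplex $\Delta$, which is compact, and the mean functional $\uu\mapsto \uu\cdot\bm s$ is continuous (indeed linear), so the (assumed nonempty) compact set $\mathcal{F}(\{\x\},\alpha)$ carries a minimizer $F^*$ of $E[\cdot]$.

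For the lower bound I would exhibit one valid bound achieving the value $\mu^*$ at $\x$. Since $\Omega$ is finite, there is a total order $T$ on $\Omega$ with $\x$ as its (automatically strict) maximum, so $\Omega(\x,T)=\{\x\}$ by Definition~\ref{def:pre_upper}. By Lemma~\ref{lem:pessimal_consistent} the pessimal bound $B_T^*$ is consistent with $T$ and hence is a valid bound, and by Definition~\ref{def:pessimal_bound} (cf.\ Theorem~\ref{thm:conditional_optimality}) we have $B_T^*(\x)=\min\{E[F]:F\in\mathcal{F}(\Omega(\x,T),\alpha)\}=\min\{E[F]:F\in\mathcal{F}(\{\x\},\alpha)\}=\mu^*$. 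Definition~\ref{def:point_opt} then yields $B^*(\x)\ge B_T^*(\x)=\mu^*$.

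For the upper bound, let $B$ be an arbitrary valid bound. I would first show $B(\x)\le E[G]$ for every $G\in\mathcal{G}(\{\x\},\alpha)$: if instead $B(\x)>E[G]$ for some such $G$, then $\{\x\}\subseteq\{\y\in\Omega: B(\y)>E[G]\}$, so the defining property of a valid $(1-\alpha)$ lower bound gives $\alpha < P_G[\{\x\}]\le P_G[\{\y: B(\y)>E[G]\}]\le\alpha$, a contradiction. Hence $B(\x)\le\inf\{E[G]:G\in\mathcal{G}(\{\x\},\alpha)\}$. It then remains to identify this infimum with $\mu^*$: it is $\ge\mu^*$ since $\mathcal{G}(\{\x\},\alpha)\subseteq\mathcal{F}(\{\x\},\alpha)$, and it is $\le\mu^*$ because applying Lemma~\ref{lem:mean_eps_close} to $F^*$ produces, for every $\epsilon>0$, some $G\in\mathcal{G}(\{\x\},\alpha)$ with $E[G]<\mu^*+\epsilon$. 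Thus $B(\x)\le\mu^*$, and taking the supremum over valid bounds $B$ gives $B^*(\x)\le\mu^*$. Combining the two inequalities completes the proof.

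I expect the only delicate point to be the passage between the \emph{strict} condition $P_F[\{\x\}]>\alpha$ defining $\mathcal{G}$ and the closed set $\mathcal{F}$ it generates: the contradiction step is available only for distributions strictly inside $\mathcal{G}$, so the approximation supplied by Lemma~\ref{lem:mean_eps_close} (or Lemma~\ref{lem:closure_mean}) is essential, as is the compactness observation that makes $\mu^*$ an actual minimum rather than just an infimum. The rest is routine unwinding of the definitions.
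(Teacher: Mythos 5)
Your proof is correct, but it takes a more self-contained route than the paper. The paper's argument runs entirely through the machinery of pessimal bounds: by Definition~\ref{def:point_opt} together with Theorem~\ref{thm:conditional_optimality} it reduces the problem to maximizing, over total orders $T$, the value $\min\{E[F]:F\in\mathcal{F}(\Omega(\x,T),\alpha)\}$, and then observes that every such set contains $\mathcal{F}(\{\x\},\alpha)$, which is itself realized by any order placing $\x$ last --- so that order gives the largest value. Your lower bound is the same construction (the order with $\x$ as maximum), but your upper bound is different: instead of invoking conditional optimality to dominate an arbitrary valid bound by a pessimal one, you argue directly from the coverage definition of validity that $B(\x)\le E[G]$ for every $G\in\mathcal{G}(\{\x\},\alpha)$, and then pass to $\mu^*$ via Lemma~\ref{lem:mean_eps_close} and the compactness of $\mathcal{F}(\{\x\},\alpha)$. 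This buys you independence from the (implicit) step in the paper that every valid bound is consistent with \emph{some} total order and hence dominated at $\x$ by the corresponding pessimal bound; the cost is that you must unpack the definition of a valid $(1-\alpha)$ bound, which this paper never states explicitly but inherits from Learned-Miller. Your attention to attainment of the minimum and to the strict-versus-closed distinction between $\mathcal{G}$ and $\mathcal{F}$ is a genuine point the paper glosses over. One small inaccuracy: you write that $B_T^*$ is consistent with $T$ (Lemma~\ref{lem:pessimal_consistent}) ``and hence is a valid bound''; consistency does not imply validity --- the validity of the pessimal bound is part of the Learned-Miller framework behind Theorem~\ref{thm:conditional_optimality}, which you do cite, so the fact you need is available, but the ``hence'' should be rerouted through that theorem rather than through consistency.
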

\begin{proof}
    From Definition~\ref{def:point_opt} and Theorem~\ref{thm:conditional_optimality}, it will suffice to find the highest bound value for $\x$ among all pessimal bounds (see Definition~\ref{def:pessimal_bound}). These bounds amount to finding the distribution $F^*$ achieving minimum mean among all distributions in a set of the form $\{F: F \in \mathcal{F}(\Omega(\x, T), \alpha)\}$ for some total order $T$. The conclusion follows by noticing that each of these sets contains the set $\{F: F \in \mathcal{F}(\{\x\}, \alpha)\}$, which corresponds to any total order that places $\x$ last in its ordering.
\end{proof}

\begin{theorem}
The optimal bound for homogeneous $\bm S_i \in \Omega$ is given by
    \begin{equation}
        B^*(\bm S_i) = S_{\min} (1 - \sqrt[n]{\alpha}) + S_i \sqrt[n]{\alpha}.
    \end{equation}
\end{theorem}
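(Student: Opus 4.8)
The plan is to invoke the pointwise-optimality formula of Lemma~\ref{lem:point_opt} and then solve the resulting linear optimization over the simplex $\Delta$ in closed form. By that lemma, $B^*(\bm S_i) = \min\{E[F] : F \in \mathcal{F}(\{\bm S_i\}, \alpha)\}$, so it suffices to describe the set $\mathcal{F}(\{\bm S_i\}, \alpha)$ and then minimize the mean over it.

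First I would evaluate $P_F[\{\bm S_i\}]$ for $F = H_{\uu}$. Drawing a sample whose sorted form is $(S_i, \ldots, S_i)$ forces every one of the $n$ i.i.d.\ coordinates to equal $S_i$, and this is the unique ordered tuple with that sorted form, so no multinomial coefficient appears and $P_{H_{\uu}}[\{\bm S_i\}] = u_i^n$. Consequently $\mathcal{G}(\{\bm S_i\}, \alpha) = \{H_{\uu} : u_i^n > \alpha\} = \{H_{\uu} : u_i > \sqrt[n]{\alpha}\}$. Taking the closure, I claim $\mathcal{F}(\{\bm S_i\}, \alpha) = \{H_{\uu} : \uu \in \Delta,\ u_i \geq \sqrt[n]{\alpha}\}$: the inclusion $\subseteq$ holds because $\{\uu \in \Delta : u_i \geq \sqrt[n]{\alpha}\}$ is closed and contains $\mathcal{G}(\{\bm S_i\},\alpha)$; for $\supseteq$, any $\uu$ with $u_i = \sqrt[n]{\alpha}$ satisfies $u_i < 1$ since $\alpha < 1$, hence some coordinate $j \neq i$ carries positive mass, and transferring a vanishing amount $\varepsilon$ of mass from $S_j$ to $S_i$ gives distributions in $\mathcal{G}(\{\bm S_i\}, \alpha)$ converging to $H_{\uu}$ (the case $u_i > \sqrt[n]{\alpha}$ being immediate).

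With the feasible set identified, I would minimize $E[H_{\uu}] = \uu \cdot \bm s = \sum_j u_j S_j$ over the compact convex set $\{\uu \in \Delta : u_i \geq \sqrt[n]{\alpha}\}$, on which the continuous functional attains its minimum. Using $S_j \geq S_{\min}$ for every $j$, then $u_i \geq \sqrt[n]{\alpha}$ together with $S_i - S_{\min} \geq 0$,
\[
\sum_j u_j S_j \;\geq\; u_i S_i + (1-u_i)S_{\min} \;=\; S_{\min} + u_i\,(S_i - S_{\min}) \;\geq\; S_{\min} + \sqrt[n]{\alpha}\,(S_i - S_{\min}) \;=\; S_{\min}(1-\sqrt[n]{\alpha}) + S_i\sqrt[n]{\alpha}.
\]
Equality is attained by the feasible vector placing mass $1 - \sqrt[n]{\alpha}$ on $S_{\min}$ and mass $\sqrt[n]{\alpha}$ on $S_i$ (which degenerates to the point mass at $S_{\min}$ when $S_i = S_{\min}$, still consistent with the formula), so this common value is exactly $B^*(\bm S_i)$.

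I expect the closure computation to be the only delicate point: one must check that no boundary configuration with $u_i = \sqrt[n]{\alpha}$ is isolated from $\mathcal{G}(\{\bm S_i\}, \alpha)$, which is precisely where the hypotheses $\alpha < 1$ (so $\sqrt[n]{\alpha} < 1$) and the availability of a second support point are used; the remaining steps are a routine linear program over $\Delta$ together with the already-established Lemma~\ref{lem:point_opt}.
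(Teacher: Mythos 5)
Your proposal is correct, and it shares the paper's skeleton: both start from Lemma~\ref{lem:point_opt}, both observe that membership in $\mathcal{G}(\{\bm S_i\},\alpha)$ forces mass strictly greater than $\sqrt[n]{\alpha}$ on $S_i$, and both end at the two-point distribution with mass $\sqrt[n]{\alpha}$ at $S_i$ and $1-\sqrt[n]{\alpha}$ at $S_{\min}$. Where you diverge is in how the closure is handled. The paper never identifies $\mathcal{F}(\{\bm S_i\},\alpha)$ explicitly; it invokes Lemma~\ref{lem:closure_mean} to express $B^*(\bm S_i)$ as the limiting mean of a sequence $\{G_{p_0+\epsilon}\}_{\epsilon \downarrow 0}$ and then dominates each $G_p$ by the two-point distribution $H_p$, taking the limit of $E[H_{p_0+\epsilon}]$. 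You instead compute the closure exactly, showing $\mathcal{F}(\{\bm S_i\},\alpha)=\{H_{\uu}: \uu\in\Delta,\ u_i\geq\sqrt[n]{\alpha}\}$ (using $\alpha<1$ so boundary points are limits of interior points), and then solve the resulting linear program over a compact convex set with an explicit chain of inequalities and an attaining minimizer. Your route buys a bit more: it proves the minimum in Definition~\ref{def:pessimal_bound} is actually attained rather than only approached in the limit, and it avoids the paper's somewhat informal domination-of-sequences step; the paper's route stays inside the sequence-based machinery (Lemma~\ref{lem:closure_mean}) it reuses elsewhere, so it never has to characterize a closure of a set of distributions. Both are valid proofs of the stated formula.
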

\begin{proof}
    From Lemma~\ref{lem:point_opt}, we know that $B^*(\bm S_i) = \min \{E[F]: F \in \mathcal{F}(\{\bm S_i\}, \alpha)\}$. Let $p_0 = \sqrt[n]{\alpha}$. Every distribution $G_p \in \mathcal{G}(\{\bm S_i\}, \alpha)$ places mass $p = p_0 + \epsilon$ on support point $S_i$, for some $\epsilon > 0$.
    Thus, according to Lemma~\ref{lem:closure_mean}, $B^*(\bm S_i)$ will be equal to the limiting mean of some sequence $\{G_{p_0 + \epsilon}\}_{\epsilon \downarrow 0}$.
    Let $H_p \in \mathcal{G}(\{\bm S_i\}, \alpha)$ be the distribution that places mass $p$ at support point $S_i$, and places the remaining mass at $S_{\min}$. It is clear that 1) $G_p \in \mathcal{G}(\{\bm S_i\}, \alpha) \Rightarrow H_p \in \mathcal{G}(\{\bm S_i\}, \alpha)$ 2) every sequence $\{G_{p_0 + \epsilon}\}_{\epsilon \downarrow 0}$ is bounded below by the corresponding sequence $\{H_{p_0 + \epsilon}\}_{\epsilon \downarrow 0}$, and 3) $E[H_p] \leq E[G_p]$. Therefore, 
    \begin{equation}
        B^*(\bm S_i) = \lim_{\epsilon \rightarrow 0} E[H_{p_0 + \epsilon}] = S_{\min} (1 - \sqrt[n]{\alpha}) + S_i \sqrt[n]{\alpha}.
    \end{equation}
\end{proof}

\section{Monotone Orders}

In this section we show that the low and high lexicographic orders are extremal in the sense that the conditionally-optimal bound for any monotone order evaluated at a given sample will fall between the values given for \emph{nearby} samples by bounds conditionally-optimal for the low and high lexicographic orders.

\begin{proposition}
\label{prop:lexi_mono}
    The low and high lexicographic orders are monotone.
\end{proposition}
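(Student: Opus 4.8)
The plan is to unwind the recursive definitions of $L_i$ and $H_i$ directly and show that the hypothesis $\bm x \leq \bm y$ forces each recursion to bottom out at its sentinel value ($L_{n+1}(\bm x,\bm y)=1$, respectively $H_0(\bm x,\bm y)=1$). The two lexicographic orders are mirror images of one another, so I would carry out the argument in full for $T_\ell$ and then record the symmetric version for $T_h$. Throughout I read $\wedge$ as binding more tightly than $\vee$, so that Equation~\ref{eq:lexi_low} means $L_i(\bm x,\bm y) = (x_{(i)} < y_{(i)}) \vee \big[(x_{(i)} = y_{(i)}) \wedge L_{i+1}(\bm x, \bm y)\big]$, and likewise for $H_i$.

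For $T_\ell$ I would prove, by downward induction on $k$ from $n+1$ to $1$, the auxiliary claim: if $x_{(i)} \leq y_{(i)}$ for every $i \in \{k, \ldots, n\}$, then $L_k(\bm x,\bm y) = 1$. The base case $k = n+1$ is immediate since $L_{n+1}(\bm x,\bm y) = 1$ by definition (the hypothesis being vacuous). For the inductive step, assume the claim at $k+1$ and suppose $x_{(i)} \leq y_{(i)}$ for all $i \in \{k,\ldots,n\}$; in particular $x_{(k)} \leq y_{(k)}$, so either $x_{(k)} < y_{(k)}$, whence the first disjunct of Equation~\ref{eq:lexi_low} gives $L_k(\bm x,\bm y) = 1$, or $x_{(k)} = y_{(k)}$, in which case the hypothesis still supplies $x_{(i)} \leq y_{(i)}$ for $i \in \{k+1,\ldots,n\}$, so $L_{k+1}(\bm x,\bm y) = 1$ by the inductive hypothesis and the second disjunct $(x_{(k)} = y_{(k)}) \wedge L_{k+1}(\bm x,\bm y)$ gives $L_k(\bm x,\bm y) = 1$. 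Specializing to $k = 1$ and recalling from Definition~\ref{def:sample_ineq} that $\bm x \leq \bm y$ is exactly $\forall i,\ x_{(i)} \leq y_{(i)}$, we obtain $\bm x \leq \bm y \Rightarrow L_1(\bm x,\bm y) = 1 \Leftrightarrow \bm x \leq_{T_\ell} \bm y$, which is precisely monotonicity of $T_\ell$.

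For $T_h$ the symmetric auxiliary claim, proved by (upward) induction on $k$ from $0$ to $n$, is: if $x_{(j)} \leq y_{(j)}$ for every $j \in \{1,\ldots,k\}$, then $H_k(\bm x,\bm y) = 1$. The base case is $H_0(\bm x,\bm y) = 1$, and the inductive step is structurally identical, this time peeling off index $k$ via the recursion for $H_k$ and invoking the inductive hypothesis for $H_{k-1}$ when $x_{(k)} = y_{(k)}$. Taking $k = n$ yields $\bm x \leq \bm y \Rightarrow H_n(\bm x,\bm y) = 1 \Leftrightarrow \bm x \leq_{T_h} \bm y$, establishing monotonicity of $T_h$.

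I do not expect a genuine obstacle here: the content is entirely bookkeeping with the recursive definitions. The only points requiring care are fixing the operator precedence in the definitions of $L_i$ and $H_i$ as above, and orienting each induction so that the sentinel value is the base case rather than attempting to recurse in the wrong direction.
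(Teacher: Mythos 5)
Your proof is correct and follows essentially the same route as the paper: both arguments amount to unwinding the recursive definitions of $L_i$ and $H_i$ to see that $\bm x \leq \bm y$ forces $L_1(\bm x, \bm y) = 1$ (resp.\ $H_n(\bm x, \bm y) = 1$). The only difference is presentational — the paper argues by contradiction and simply asserts that $L_1(\bm x, \bm y) = 0$ implies some $x_{(i)} > y_{(i)}$, whereas you supply the explicit induction establishing (the contrapositive of) that step.
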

\begin{proof}
    We prove the proposition for $T_\ell$; the proof for $T_h$ is similar. Suppose that for any $\bm x, \bm y \in \Omega$, $\bm x \leq \bm y$ but $\neg (\bm x \leq_{T_\ell} \bm y)$. Since $\neg (\bm x \leq_{T_\ell} \bm y)$ we know by definition of $T_\ell$ that $L_1(\bm x, \bm y) = 0$, which implies that $\exists i \in [n]: x_{(i)} > y_{(i)}$, i.e. $\neg (\bm x \leq \bm y)$. But this contradicts the assumption that $\bm x \leq \bm y$. Thus, we must have $\bm x \leq \bm y \Rightarrow \bm x \leq_T \bm y$.
\end{proof}

\begin{proposition}
\label{prop:lexi_low_equiv}
    Let $\bm y \in \Omega$ be arbitrary and $\bm S_i \in \Omega$ be any homogeneous sample. For the low lexicographic order $T_\ell$ we have
    \begin{equation}
        \bm S_i \leq_{T_\ell} \bm y \Leftrightarrow \bm S_i \leq \bm y.
    \end{equation}
\end{proposition}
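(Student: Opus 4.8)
The plan is to prove the two implications separately. The reverse implication, $\bm S_i \leq \bm y \Rightarrow \bm S_i \leq_{T_\ell} \bm y$, is immediate: Proposition~\ref{prop:lexi_mono} states that $T_\ell$ is monotone, so by the definition of monotonicity $\bm S_i \leq \bm y$ already gives $\bm S_i \leq_{T_\ell} \bm y$. No further work is needed for this direction.

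For the forward implication, $\bm S_i \leq_{T_\ell} \bm y \Rightarrow \bm S_i \leq \bm y$, I would start from the hypothesis, which by Definition~\ref{def:lexi_low} means $L_1(\bm S_i, \bm y) = 1$. I then expand $L_1$ using Equation~(\ref{eq:lexi_low}) at index $1$ and use that every order statistic of the homogeneous sample $\bm S_i$ equals $S_i$, so that $L_1(\bm S_i, \bm y) = (S_i < y_{(1)}) \vee (S_i = y_{(1)}) \wedge L_2(\bm S_i, \bm y)$. For this disjunction to equal $1$, at least one disjunct must hold, and in either case we obtain $S_i \leq y_{(1)}$. Since $\bm y \in \Omega$ has its components arranged in increasing order, $y_{(1)} \leq y_{(j)}$ for every $j \in [n]$, hence $S_i \leq y_{(j)}$ for every $j \in [n]$. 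By Definition~\ref{def:sample_ineq}, this is precisely $\bm S_i \leq \bm y$, which completes the proof.

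I do not expect a real obstacle here. The one point worth isolating is that, because $\bm S_i$ is homogeneous, one need not descend through the full recursion defining $L_1$: the comparison at the first coordinate alone already forces $S_i \leq y_{(1)}$, and monotonicity of the order statistics of $\bm y$ then propagates the inequality to all coordinates. Thus the statement follows from Proposition~\ref{prop:lexi_mono} together with a one-line inspection of the base case of the lexicographic recursion.
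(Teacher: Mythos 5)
Your proof is correct and follows essentially the same route as the paper: monotonicity (Proposition~\ref{prop:lexi_mono}) handles the reverse implication, and the forward implication rests on the observation that $L_1(\bm S_i, \bm y) = 1$ forces $S_i \leq y_{(1)}$, which propagates to all order statistics of $\bm y$. The only difference is cosmetic — you argue directly while the paper argues by contradiction.
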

\begin{proof}
    Since $T_\ell$ is monotone, it suffices to show that $\bm S_i \leq_{T_\ell} \bm y \Rightarrow \bm S_i \leq \bm y$. Suppose to the contrary that $\bm S_i \leq_{T_\ell} \bm y$ but $\neg(\bm S_i \leq \bm y)$. Since $\neg(\bm S_i \leq \bm y)$ we know that $\exists j \in [n]: S_i > y_{(j)}$, which in turn implies that $\forall k \leq j, S_i > y_{(k)}$. But because $\bm S_i \leq_{T_\ell} \bm y$, we have that $L_1(\bm S_i, \bm y) = 1$. This implies in particular that $S_i \leq y_{(1)}$, which leads to a contradiction. Thus, it must be the case that $\bm S_i \leq_{T_\ell} \bm y \Rightarrow \bm S_i \leq \bm y$.
\end{proof}

\begin{proposition}
\label{prop:lexi_high_equiv}
    Let $\bm y \in \Omega$ be arbitrary and $\bm S_i \in \Omega$ be any homogeneous sample. For the high lexicographic order $T_h$ we have
    \begin{equation}
        \bm S_i \leq_{T_h} \bm y \Leftrightarrow \bm S_i = \bm y \vee S_i < y_{(n)}.
    \end{equation}
\end{proposition}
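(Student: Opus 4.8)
The plan is to prove both implications directly from the recursive definition of $T_h$, exploiting two facts: every component of $\bm S_i$ equals $S_i$, and the components of $\bm y$ are listed in increasing order, so $y_{(1)} \leq \cdots \leq y_{(n)} $.

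For the $(\Leftarrow)$ direction I would split into two cases. If $\bm S_i = \bm y$, it suffices to observe that $T_h$ is reflexive, which follows from a trivial downward induction showing $H_k(\bm x, \bm x) = 1$ for all $0 \leq k \leq n$: the base case is $H_0 \equiv 1$, and since $x_{(k)} = x_{(k)}$ the clause $H_k(\bm x, \bm x)$ reduces to $H_{k-1}(\bm x, \bm x)$. If instead $S_i < y_{(n)}$, then the first disjunct $x_{(n)} < y_{(n)}$ in $H_n(\bm S_i, \bm y) = (S_i < y_{(n)}) \vee (S_i = y_{(n)}) \wedge H_{n-1}(\bm S_i, \bm y)$ already holds, so $H_n(\bm S_i, \bm y) = 1$ and hence $\bm S_i \leq_{T_h} \bm y$.

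For the $(\Rightarrow)$ direction, suppose $\bm S_i \leq_{T_h} \bm y$, i.e. $H_n(\bm S_i, \bm y) = 1$, and assume $\neg(S_i < y_{(n)})$, i.e. $S_i \geq y_{(n)}$; I must deduce $\bm S_i = \bm y$. The key observation is that since $y_{(k)} \leq y_{(n)} \leq S_i$ for every $k \in [n]$, the strict inequality $S_i < y_{(k)}$ is false for every $k$. Consequently, whenever $H_k(\bm S_i, \bm y) = 1$, the definition $H_k(\bm S_i, \bm y) = (S_i < y_{(k)}) \vee (S_i = y_{(k)}) \wedge H_{k-1}(\bm S_i, \bm y)$ forces \emph{both} $S_i = y_{(k)}$ \emph{and} $H_{k-1}(\bm S_i, \bm y) = 1$. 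Running this downward induction from $k = n$ (where $H_n(\bm S_i, \bm y) = 1$ by hypothesis) down to $k = 1$ yields $S_i = y_{(k)}$ for all $k \in [n]$, i.e. $\bm y = (S_i, \ldots, S_i) = \bm S_i$. Taking the contrapositive, $\bm S_i \leq_{T_h} \bm y$ together with $\bm S_i \neq \bm y$ imply $S_i < y_{(n)}$, which combined with the two cases above establishes the equivalence.

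I do not anticipate a serious obstacle here. The only point requiring care is the operator precedence in the recursive definition of $H_k$: the conjunction binds more tightly than the disjunction, so a false strict inequality $S_i < y_{(k)}$ does not by itself falsify the clause — one must additionally conclude $S_i = y_{(k)}$ before descending to $H_{k-1}$ — and the induction must be set up as a descending induction on the index $k$ rather than an ascending one.
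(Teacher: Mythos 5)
Your proof is correct, and the backward direction coincides with the paper's (the $S_i < y_{(n)}$ case makes the first disjunct of $H_n$ true; the $\bm S_i = \bm y$ case is reflexivity). The forward direction, however, takes a genuinely different route. The paper argues indirectly: assuming $\bm S_i \leq_{T_h} \bm y$, $\bm S_i \neq \bm y$, and $y_{(n)} \leq S_i$, it uses homogeneity to conclude $\bm y < \bm S_i$ in the sense of Definition~\ref{def:sample_ineq}, then invokes the already-established monotonicity of $T_h$ (Proposition~\ref{prop:lexi_mono}) to get $\bm y <_{T_h} \bm S_i$, contradicting $\bm S_i \leq_{T_h} \bm y$. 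You instead unwind the recursion directly: since $y_{(k)} \leq y_{(n)} \leq S_i$ kills the strict disjunct at every level, $H_k(\bm S_i, \bm y) = 1$ forces both $S_i = y_{(k)}$ and $H_{k-1}(\bm S_i, \bm y) = 1$, and descending from $k = n$ yields $\bm y = \bm S_i$ exactly. Your version is more elementary and self-contained --- it does not depend on Proposition~\ref{prop:lexi_mono} or on the antisymmetry-style fact that $\bm y <_T \bm x$ and $\bm x \leq_T \bm y$ are incompatible --- and it extracts slightly more information (the only sample $\bm y$ with $y_{(n)} \leq S_i$ that satisfies $\bm S_i \leq_{T_h} \bm y$ is $\bm S_i$ itself). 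The paper's version is shorter given the machinery already in place. Your closing remarks on operator precedence and the direction of the induction are exactly the points where a careless version of this argument would go wrong, so it is good that you flagged them; the only quibble is that the reflexivity argument is most naturally read as ascending in $k$ (from the base case $H_0 \equiv 1$ upward), but this does not affect correctness.
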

\begin{proof}
    The equivalence is clearly true when $\bm S_i = \bm y$, so we proceed under the assumption that $\bm S_i \neq \bm y$.
    We begin by showing that $\bm S_i \leq_{T_h} \bm y \Rightarrow S_i < y_{(n)}$. Suppose to the contrary that 
    $\bm S_i \leq_{T_h} \bm y$ but $y_{(n)} \leq S_i$. Since $\bm S_i$ is homogeneous it must also be the case that $\bm y \leq \bm S_i$. Combined with the assumption that $\bm S_i \neq \bm y$, this implies that $\bm y < \bm S_i$. But since $T_h$ is monotone we also have that $\bm y < \bm S_i \Rightarrow \bm y <_{T_h} \bm S_i$, which contradicts the original assumption that $\bm S_i \leq_{T_h} \bm y$. Thus, $\bm S_i \leq_{T_h} \bm y \Rightarrow S_i < y_{(n)}$.

    Next we show that $S_i < y_{(n)} \Rightarrow \bm S_i \leq_{T_h} \bm y$. From $S_i < y_{(n)}$ it is immediately clear that $H_n(\bm S_i, \bm y) = 1$, which implies by definition that $\bm S_i \leq_{T_h} \bm y$. Thus, $S_i < y_{(n)} \Rightarrow \bm S_i \leq_{T_h} \bm y$.
\end{proof}

\begin{theorem}
\label{thm:lexi_sandwich}
    For any monotone total order $T$ and homogeneous sample $\bm S_i \in \Omega$ we have
    \begin{equation}
        \Omega(\bm S_i, T_\ell) \subseteq \Omega(\bm S_i, T) \subseteq \Omega(\bm S_i, T_h).
    \end{equation}
\end{theorem}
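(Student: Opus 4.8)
The plan is to unwind the definition of the upper sets and reduce both inclusions to membership implications. By Definition~\ref{def:pre_upper}, the inclusion $\Omega(\bm S_i, T_\ell) \subseteq \Omega(\bm S_i, T)$ is equivalent to the statement that for every $\bm y \in \Omega$, $\bm S_i \leq_{T_\ell} \bm y \Rightarrow \bm S_i \leq_T \bm y$, and likewise $\Omega(\bm S_i, T) \subseteq \Omega(\bm S_i, T_h)$ is equivalent to $\bm S_i \leq_T \bm y \Rightarrow \bm S_i \leq_{T_h} \bm y$. So I would fix an arbitrary $\bm y \in \Omega$ and establish these two implications in turn, using the equivalences already proved for the homogeneous sample against $T_\ell$ and $T_h$.

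For the left inclusion I would invoke Proposition~\ref{prop:lexi_low_equiv}, which gives $\bm S_i \leq_{T_\ell} \bm y \Leftrightarrow \bm S_i \leq \bm y$. Since $T$ is monotone, the definition of a monotone order gives $\bm S_i \leq \bm y \Rightarrow \bm S_i \leq_T \bm y$ directly. Chaining these two facts closes this inclusion immediately, with no further work.

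For the right inclusion I would use the characterization from Proposition~\ref{prop:lexi_high_equiv}, namely $\bm S_i \leq_{T_h} \bm y \Leftrightarrow (\bm S_i = \bm y \vee S_i < y_{(n)})$, so it suffices to show that $\bm S_i \leq_T \bm y$ implies $\bm S_i = \bm y$ or $S_i < y_{(n)}$. I would argue by contradiction: assume $\bm S_i \leq_T \bm y$, $\bm S_i \neq \bm y$, and $S_i \geq y_{(n)}$. Since $\bm S_i$ is homogeneous and $y_{(j)} \leq y_{(n)} \leq S_i$ for every $j \in [n]$, this yields $\bm y \leq \bm S_i$, and together with $\bm S_i \neq \bm y$ it gives $\bm y < \bm S_i$. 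The strict form of monotonicity (the remark that for monotone $T$, $\bm x < \bm y \Rightarrow \bm x <_T \bm y$) then gives $\bm y <_T \bm S_i$, which in a total order contradicts $\bm S_i \leq_T \bm y$. Hence the assumption is untenable and the implication holds.

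The main obstacle is the right inclusion: unlike the left one it is not a one-line consequence of monotonicity, and the argument must exploit both the homogeneity of $\bm S_i$ — to convert the scalar comparison $y_{(n)} \leq S_i$ into the vector comparison $\bm y \leq \bm S_i$ — and the strict version of monotonicity combined with the antisymmetry inherent in a total order. Once this bookkeeping is arranged, no computation is needed.
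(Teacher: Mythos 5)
Your proposal is correct and follows essentially the same route as the paper: both inclusions are reduced via Propositions~\ref{prop:lexi_low_equiv} and~\ref{prop:lexi_high_equiv} together with (strict) monotonicity of $T$, and your argument for the right inclusion is the paper's argument verbatim. The only cosmetic difference is that you prove the left inclusion directly by chaining implications, whereas the paper phrases the same logic as a proof by contradiction.
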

\begin{proof}
    We first show that $\Omega(\bm S_i, T_\ell) \subseteq \Omega(\bm S_i, T)$. Assume to the contrary that $\exists \bm y \in \Omega$ such that $\bm y \in \Omega(\bm S_i, T_\ell)$ and $\bm y \not \in \Omega(\bm S_i, T)$, the latter of which implies $\neg (\bm S_i \leq_T \bm y)$. Since $T$ is monotone we have $\neg (\bm S_i \leq_T \bm y) \Rightarrow \neg (\bm S_i \leq \bm y)$. Therefore, it must be the case that $\neg (\bm S_i \leq \bm y)$. On the other hand, since $\bm y \in \Omega(\bm S_i, T_\ell)$ we know by Proposition~\ref{prop:lexi_low_equiv} that $\bm S_i \leq \bm y$, which leads to a contradiction. Thus, $\Omega(\bm S_i, T_\ell) \subseteq \Omega(\bm S_i, T)$.

    Next, we show that $\Omega(\bm S_i, T) \subseteq \Omega(\bm S_i, T_h)$. Suppose to the contrary that $\exists \bm y \in \Omega: \bm y \in \Omega(\bm S_i, T)$ and $\bm y \not \in \Omega(\bm S_i, T_h)$. 
    Since $\bm y \not \in \Omega(\bm S_i, T_h)$, we know by Proposition~\ref{prop:lexi_high_equiv} that $\bm y \neq \bm S_i$ and $y_{(n)} \leq S_i$, which together imply that $\bm y < \bm S_i$. 
    And since $T$ is monotone, $\bm y < \bm S_i$ implies that $\bm y <_T \bm S_i$. On the other hand, $\bm y \in \Omega(\bm S_i, T)$ implies that $\bm S_i \leq_T \bm y$, which leads to a contradiction. Thus, we have shown that $\Omega(\bm S_i, T) \subseteq \Omega(\bm S_i, T_h)$.
\end{proof}

\begin{theorem}
\label{thm:sandwich}
    Let $T$ be any monotone total order, $B_T^*$ the conditionally optimal bound for $T$, and $\bm x \in \Omega$ an arbitrary sample. If $S_i \in S$ is such that $\bm S_i \leq_T \bm x \leq_T \bm S_{i+1}$, then
    \begin{equation}
        B_{T_h}^*(\bm S_i) \leq B_T^*(\bm x) \leq B_{T_\ell}^*(\bm S_{i+1}).
    \end{equation}
\end{theorem}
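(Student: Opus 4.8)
The plan is to reduce the theorem to a single monotonicity property of the pessimal bound and then feed in the set inclusions already proved in Theorem~\ref{thm:lexi_sandwich}.

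\emph{Step 1: $B^*$ is order-reversing in its upper-set argument.} I would first record that if $\Omega_1 \subseteq \Omega_2$ are upper sets, then $\mathcal{G}(\Omega_1, \alpha) \subseteq \mathcal{G}(\Omega_2, \alpha)$ directly from Definition~\ref{def:sets_of_dists}, since $P_F[\Omega_1] \le P_F[\Omega_2]$ forces $P_F[\Omega_1] > \alpha \Rightarrow P_F[\Omega_2] > \alpha$; because closure is monotone under inclusion, this gives $\mathcal{F}(\Omega_1, \alpha) \subseteq \mathcal{F}(\Omega_2, \alpha)$. Minimizing the mean functional — continuous, and over a closed subset of the compact simplex $\Delta$, hence attaining its minimum whenever the set is non-empty — over a larger set cannot increase the minimum, so $\min\{E[F]: F \in \mathcal{F}(\Omega_2,\alpha)\} \le \min\{E[F]: F \in \mathcal{F}(\Omega_1,\alpha)\}$. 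Stated with Definition~\ref{def:pessimal_bound}: whenever $\Omega(\bm y, R') \subseteq \Omega(\bm z, R'')$ (and both likely sets are non-empty), we have $B_{R''}^*(\bm z) \le B_{R'}^*(\bm y)$.

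\emph{Step 2: chaining the inclusions.} For the left inequality, $\bm S_i \le_T \bm x$ gives $\Omega(\bm x, T) \subseteq \Omega(\bm S_i, T)$ by Definition~\ref{def:pre_upper}, while Theorem~\ref{thm:lexi_sandwich} — applicable because $T$ is monotone and $\bm S_i$ is homogeneous — gives $\Omega(\bm S_i, T) \subseteq \Omega(\bm S_i, T_h)$; composing these and applying Step~1 yields $B_{T_h}^*(\bm S_i) \le B_T^*(\bm x)$. For the right inequality, $\bm x \le_T \bm S_{i+1}$ gives $\Omega(\bm S_{i+1}, T) \subseteq \Omega(\bm x, T)$, Theorem~\ref{thm:lexi_sandwich} gives $\Omega(\bm S_{i+1}, T_\ell) \subseteq \Omega(\bm S_{i+1}, T)$, and composing plus Step~1 yields $B_T^*(\bm x) \le B_{T_\ell}^*(\bm S_{i+1})$.

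\emph{Where the care is needed.} There is no deep obstacle; the argument is essentially bookkeeping on top of Theorem~\ref{thm:lexi_sandwich}. The two things to watch are (i) applying $\bm u \lesssim_R \bm v \Rightarrow \Omega(\bm v, R) \subseteq \Omega(\bm u, R)$ with the variables in the correct order in each of the two halves, since it is easy to reverse a direction and invert the inequality; and (ii) confirming that every minimum invoked is over a non-empty set — $\mathcal{F}(\Omega(\bm x, T), \alpha)$ is non-empty by the standing assumption, $\mathcal{F}(\Omega(\bm S_i, T_h), \alpha)$ then inherits non-emptiness as a superset of it via Step~1, and $\mathcal{F}(\Omega(\bm S_{i+1}, T_\ell), \alpha)$ is non-empty again under the standing assumption on the preorders under consideration.
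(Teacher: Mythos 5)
Your proposal is correct and follows essentially the same route as the paper: anti-monotonicity of the pessimal bound in its upper-set argument (the paper's facts (1), (3), (4)) combined with the inclusions from Theorem~\ref{thm:lexi_sandwich}. The only cosmetic difference is that where you handle the step from $\bm x$ to $\bm S_i$ (resp.\ $\bm S_{i+1}$) by the upper-set inclusion $\Omega(\bm x, T) \subseteq \Omega(\bm S_i, T)$, the paper invokes consistency of $B_T^*$ with $T$ — which is itself proved by exactly that inclusion — so the two arguments coincide in substance.
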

\begin{proof}
    We begin by stating some facts
    \begin{enumerate}
        \item For upper sets $\Omega_1 \subseteq \Omega_2 \subseteq \Omega$ we have $\mathcal{F}(\Omega_1) \subseteq \mathcal{F}(\Omega_2)$: Take any $F \in \mathcal{F}(\Omega_1)$. By construction, $P_F[\Omega_1] > \alpha$. But since $\Omega_1 \subseteq \Omega_2$, it must also be the case that $P_F[\Omega_2] > \alpha$, which implies that $F \in \mathcal{F}(\Omega_2)$ as well.
        \item Since $B_T^*$ is consistent with $T$, and $\bm S_i \leq_T \bm x \leq_T \bm S_{i+1}$, we know that $B_T^*(\bm S_i) \leq B_T^*(\bm x) \leq B_T^*(\bm S_{i+1})$.
        \item For any subsets of distributions $\mathcal{F}_1 \subseteq \mathcal{F}_2 \subseteq \mathcal{F}$ we have that that $\min\{E[F]: F \in \mathcal{F}_2\} \leq \min\{E[F]: F \in \mathcal{F}_1\}$.
        \item For any order $T'$ we know by Theorem~\ref{thm:conditional_optimality} that $B_{T'}^*(\bm S_i) = \min\{E[F]: F \in \mathcal{F}(\Omega(\bm S_i, T'), \alpha)\}$. 
    \end{enumerate}
    We now show that  $B_T^*(\bm x) \leq B_{T_\ell}^*(\bm S_{i+1})$. By Theorem~\ref{thm:lexi_sandwich} we have that $\Omega(\bm S_i, T_\ell) \subseteq \Omega(\bm S_i, T)$. Thus, according to (1), $\mathcal{F}(\Omega(\bm S_i, T_\ell), \alpha) \subseteq \mathcal{F}(\Omega(\bm S_i, T), \alpha)$. Finally, from (2-4) we conclude that
    \begin{equation}
    \begin{array}{rcl}
        B_T^*(\bm x) &\leq& B_T^*(\bm S_{i+1}) \\
        &=&\min\{E[F]: F \in \mathcal{F}(\Omega(\bm S_{i+1}, T), \alpha)\} \\
        &\leq& \min\{E[F]: F \in \mathcal{F}(\Omega(\bm S_{i+1}, T_\ell), \alpha)\} \\
        &=& B_{T_\ell}^*(\bm S_{i+1}) \\
    \end{array}.
    \end{equation}
     We take a similar approach to show that $B_{T_h}^*(\bm S_i) \leq B_T^*(\bm x)$. Again by Theorem~\ref{thm:lexi_sandwich} we know that $\Omega(\bm S_i, T) \subseteq \Omega(\bm S_i, T_h)$, which combined with (1) implies that $\mathcal{F}(\Omega(\bm S_i, T), \alpha) \subseteq \mathcal{F}(\Omega(\bm S_i, T_h), \alpha)$. Invoking (2-4) gives
    \begin{equation}
    \begin{array}{rcl}
        B_{T_h}^*(\bm S_i) &=& \min\{E[F]: F \in \mathcal{F}(\Omega(\bm S_i, T_h), \alpha)\} \\
        &\leq& \min\{E[F]: F \in \mathcal{F}(\Omega(\bm S_i, T), \alpha)\} \\
        &=& B_T^*(\bm S_i) \\
        &\leq& B_T^*(\bm x) \\
    \end{array}.
    \end{equation}
\end{proof}

\begin{corollary}
\label{cor:lexi_low_opt}
    For any homogeneous sample $S_i \in S$, $B_{T_h}^*(\bm S_i)$ is the weakest (lowest) and $B_{T_\ell}^*(\bm S_i)$ the strongest (highest) among all bounds conditionally optimal with respect to a monotone total order.
\end{corollary}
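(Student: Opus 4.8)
The plan is to reduce the statement to Theorem~\ref{thm:lexi_sandwich} together with the monotonicity of inclusion for likely sets. First I would observe that, by definition, a bound conditionally optimal with respect to a monotone order $T$ is exactly the pessimal bound $B_T^*$ (Theorem~\ref{thm:conditional_optimality}), and that by Proposition~\ref{prop:lexi_mono} both $T_\ell$ and $T_h$ are themselves monotone, so $B_{T_\ell}^*$ and $B_{T_h}^*$ lie inside the class of bounds under comparison. Hence it suffices to show that for every monotone order $T$ we have $B_{T_h}^*(\bm S_i) \le B_T^*(\bm S_i) \le B_{T_\ell}^*(\bm S_i)$.

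Next I would fix an arbitrary monotone order $T$ and invoke Theorem~\ref{thm:lexi_sandwich} to obtain the chain of upper sets $\Omega(\bm S_i, T_\ell) \subseteq \Omega(\bm S_i, T) \subseteq \Omega(\bm S_i, T_h)$. Applying fact (1) from the proof of Theorem~\ref{thm:sandwich} (inclusion of upper sets implies inclusion of the associated likely sets) yields $\mathcal{F}(\Omega(\bm S_i, T_\ell), \alpha) \subseteq \mathcal{F}(\Omega(\bm S_i, T), \alpha) \subseteq \mathcal{F}(\Omega(\bm S_i, T_h), \alpha)$. Then fact (3) (the minimum mean over a larger family of distributions is no larger), combined with fact (4) (which expresses each $B_{T'}^*(\bm S_i)$ as that minimum via Theorem~\ref{thm:conditional_optimality}), gives
\begin{equation}
B_{T_h}^*(\bm S_i) \le B_T^*(\bm S_i) \le B_{T_\ell}^*(\bm S_i).
\end{equation}
Since $T$ was an arbitrary monotone order and the two extremes are attained by the monotone orders $T_h$ and $T_\ell$, this is exactly the claim: $B_{T_h}^*(\bm S_i)$ is the weakest and $B_{T_\ell}^*(\bm S_i)$ the strongest among all such bounds evaluated at $\bm S_i$.

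I do not expect a serious obstacle: the corollary is essentially a specialization of the machinery already assembled for Theorem~\ref{thm:sandwich}, read at the single homogeneous sample $\bm x = \bm S_i$ rather than across an interval. The only points needing care are bookkeeping ones: confirming the non-emptiness convention in force makes each $\min\{E[F]: F \in \mathcal{F}(\Omega(\bm S_i, T), \alpha)\}$ well-defined (the smallest of the three likely sets, the one for $T_\ell$, is non-empty by assumption, and fact (1) propagates non-emptiness upward), and making explicit that ``conditionally optimal with respect to a monotone order'' ranges precisely over $\{B_T^* : T \text{ monotone}\}$. An alternative route is to apply Theorem~\ref{thm:sandwich} directly with $\bm x = \bm S_i$, using the interval $\bm S_i \le_T \bm S_i \le_T \bm S_{i+1}$ for the lower bound and $\bm S_{i-1} \le_T \bm S_i \le_T \bm S_i$ for the upper bound; this works but forces separate handling of the endpoints $i = 0$ and $i = m-1$, so the argument via Theorem~\ref{thm:lexi_sandwich} is cleaner.
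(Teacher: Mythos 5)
Your proposal is correct, but it takes a different (and arguably cleaner) route than the paper. The paper's own proof never unpacks likely sets: it simply applies Theorem~\ref{thm:sandwich} twice at the sample $\bm S_i$, once with the bracketing $\bm S_{i-1} \leq_T \bm S_i \leq_T \bm S_i$ to get $B_{T_h}^*(\bm S_{i-1}) \leq B_T^*(\bm S_i) \leq B_{T_\ell}^*(\bm S_i)$ and once with $\bm S_i \leq_T \bm S_i \leq_T \bm S_{i+1}$ to get $B_{T_h}^*(\bm S_i) \leq B_T^*(\bm S_i) \leq B_{T_\ell}^*(\bm S_{i+1})$, then keeps the useful half of each chain --- exactly the ``alternative route'' you mention and set aside. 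Your main argument instead goes back one level: it invokes Theorem~\ref{thm:lexi_sandwich} directly at $\bm S_i$, pushes the upper-set inclusions $\Omega(\bm S_i, T_\ell) \subseteq \Omega(\bm S_i, T) \subseteq \Omega(\bm S_i, T_h)$ through to likely-set inclusions, and reads off $B_{T_h}^*(\bm S_i) \leq B_T^*(\bm S_i) \leq B_{T_\ell}^*(\bm S_i)$ from the min-mean characterization --- essentially replaying facts (1), (3), (4) from the proof of Theorem~\ref{thm:sandwich} at the single sample $\bm S_i$, without ever involving the neighbors $\bm S_{i-1}$, $\bm S_{i+1}$. What your route buys is that it is self-contained at $\bm S_i$ and sidesteps the endpoint bookkeeping ($i = 0$ and $i = m-1$, where $\bm S_{i-1}$ or $\bm S_{i+1}$ does not exist) that the paper's double application quietly glosses over; what the paper's route buys is brevity, since it reuses the already-stated Theorem~\ref{thm:sandwich} as a black box rather than reopening its proof. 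Your preliminary observation that the class of bounds under comparison is exactly $\{B_T^* : T \text{ monotone}\}$, with $T_\ell$ and $T_h$ themselves monotone by Proposition~\ref{prop:lexi_mono}, is a point the paper leaves implicit and is worth stating.
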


\begin{proof}
    Let $T$ be any monotone order and $B_T^*$ the conditionally optimal bound for that order. According to Theorem~\ref{thm:sandwich} we have simultaneously that
    \begin{equation}
        B_{T_h}^*(\bm S_{i-1}) \leq B_T^*(\bm S_i) \leq B_{T_\ell}^*(\bm S_i),
    \end{equation}
    and
    \begin{equation}
        B_{T_h}^*(\bm S_i) \leq B_T^*(\bm S_i) \leq B_{T_\ell}^*(\bm S_{i+1}),
    \end{equation}
    which implies that
    \begin{equation}
        B_{T_h}^*(\bm S_i) \leq B_T^*(\bm S_i) \leq B_{T_\ell}^*(\bm S_i).
    \end{equation}
\end{proof}

\subsection{Calculating bounds for homogeneous samples}

\begin{lemma}
\label{lem:lexi_low_admis}
    Bound $B_{T_\ell}^*$ is conditionally optimal at the $1-\alpha$ level with respect to $\mathcal{F}$ only if $\forall S_i \in S$, 
    \begin{equation}
        B_{T_\ell}^*(\bm S_i) = S_{\min} (1 - \sqrt[n]{\alpha}) + S_i \sqrt[n]{\alpha}.
    \end{equation}
\end{lemma}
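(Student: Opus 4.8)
The plan is to evaluate the minimization in Definition~\ref{def:pessimal_bound} directly, using the description of the upper set of a homogeneous sample under $T_\ell$ supplied by Proposition~\ref{prop:lexi_low_equiv}. Since $T_\ell$ is a total order, Theorem~\ref{thm:conditional_optimality} (equivalently Lemma~\ref{lem:preorder_cond_opt}) guarantees that the conditionally optimal bound coincides with the pessimal bound, so $B_{T_\ell}^*(\bm S_i) = \min\{E[F] : F \in \mathcal{F}(\Omega(\bm S_i, T_\ell), \alpha)\}$. By Proposition~\ref{prop:lexi_low_equiv}, $\Omega(\bm S_i, T_\ell) = \{\y \in \Omega : \bm S_i \leq \y\}$, and because $\bm S_i$ is homogeneous this is exactly the set of samples all of whose components are at least $S_i$. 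Hence, for a distribution $F$ with mass vector $\uu \in \Delta$, the probability $P_F[\Omega(\bm S_i, T_\ell)]$ is the probability that all $n$ (i.i.d.) components of the sample land at or above $S_i$, which equals $\left( \sum_{j \geq i} u_j \right)^n$. Writing $p_0 = \sqrt[n]{\alpha}$, it follows that $F \in \mathcal{G}(\Omega(\bm S_i, T_\ell), \alpha)$ if and only if $\sum_{j \geq i} u_j > p_0$.

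From here the argument mirrors the earlier proof of the optimal bound for homogeneous samples. For the upper bound on $B_{T_\ell}^*(\bm S_i)$, I would exhibit, for each small $\epsilon > 0$, the distribution $H_\epsilon$ placing mass $p_0 + \epsilon$ on $S_i$ and the remaining mass on $S_{\min}$; it lies in $\mathcal{G}(\Omega(\bm S_i, T_\ell), \alpha)$ and has mean $(p_0 + \epsilon)S_i + (1 - p_0 - \epsilon)S_{\min}$, which tends to $S_{\min}(1 - p_0) + S_i p_0$ as $\epsilon \downarrow 0$. Lemma~\ref{lem:closure_mean} then produces an $F \in \mathcal{F}(\Omega(\bm S_i, T_\ell), \alpha)$ with this mean, so $B_{T_\ell}^*(\bm S_i) \leq S_{\min}(1 - p_0) + S_i p_0$. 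For the matching lower bound, I would observe that any $G \in \mathcal{G}(\Omega(\bm S_i, T_\ell), \alpha)$ with mass vector $\uu$ satisfies $E[G] \geq S_{\min}\sum_{j < i} u_j + S_i \sum_{j \geq i} u_j = S_{\min} + (S_i - S_{\min})\sum_{j \geq i} u_j \geq S_{\min} + (S_i - S_{\min})p_0$, using $S_j \geq S_{\min}$ for $j < i$, $S_j \geq S_i$ for $j \geq i$, $S_i \geq S_{\min}$, and $\sum_{j \geq i} u_j > p_0$. Applying the converse direction of Lemma~\ref{lem:closure_mean} (a mean attained in $\mathcal{F}(\Omega(\bm S_i, T_\ell), \alpha)$ is a limit of means of distributions in $\mathcal{G}(\Omega(\bm S_i, T_\ell), \alpha)$, each bounded below by $S_{\min}(1 - p_0) + S_i p_0$) extends this lower bound to the whole closure. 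Combining the two bounds gives $B_{T_\ell}^*(\bm S_i) = S_{\min}(1 - \sqrt[n]{\alpha}) + S_i\sqrt[n]{\alpha}$.

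I do not anticipate a serious obstacle; the argument is essentially a reprise of the homogeneous-sample optimal-bound computation, the only genuinely new ingredient being the reduction $\Omega(\bm S_i, T_\ell) = \{\y \in \Omega : y_{(1)} \geq S_i\}$ obtained from Proposition~\ref{prop:lexi_low_equiv}. The one point that needs care is the passage between the open constraint $\sum_{j \geq i} u_j > p_0$ defining $\mathcal{G}$ and the closed set $\mathcal{F}(\Omega(\bm S_i, T_\ell),\alpha)$ over which the minimum in Definition~\ref{def:pessimal_bound} is taken; this is precisely what Lemma~\ref{lem:closure_mean} handles, and it must be invoked in both directions. It is also worth remarking that the resulting value equals $B^*(\bm S_i)$ from the earlier theorem, consistent with Corollary~\ref{cor:lexi_low_opt}: on homogeneous samples the strongest monotone-consistent bound already attains the pointwise optimum.
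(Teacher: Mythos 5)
Your proposal is correct and follows essentially the same route as the paper: reduce to the pessimal bound via Theorem~\ref{thm:conditional_optimality}, use Proposition~\ref{prop:lexi_low_equiv} to identify $\Omega(\bm S_i, T_\ell)$ with the samples whose components all lie at or above $S_i$, lower-bound the mean of every member of $\mathcal{G}(\Omega(\bm S_i, T_\ell), \alpha)$ by $S_{\min}(1-\sqrt[n]{\alpha}) + S_i\sqrt[n]{\alpha}$, exhibit a two-point sequence on $\{S_{\min}, S_i\}$ converging in mean to that value, and pass to the closure with Lemma~\ref{lem:closure_mean} in both directions. Your version is, if anything, slightly more explicit than the paper's (which also treats $S_i = S_{\min}$ separately by citation, a case your argument covers directly).
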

\begin{proof}
    Fix $\bm S_i$. Learned-Miller~\cite{learnedmiller2025admissibilityboundsmeandiscrete} Lemma 2.2 establishes that the result is true for $S_i = S_{\min}$, so we assume that $S_i \neq S_{\min}$.
    By Theorem~\ref{thm:conditional_optimality} it will suffice to show that $\min\{E[F]: F \in \mathcal{F}(\Omega(\bm S_i, T_\ell), \alpha)\} = \mu$ where $\mu = S_{\min} (1 - \sqrt[n]{\alpha}) + S_i \sqrt[n]{\alpha}$.
    
    On one hand we have by Proposition~\ref{prop:lexi_low_equiv} that for all $\bm y \in \Omega$, $\bm S_i \leq_{T_\ell} \bm y \Rightarrow \bm S_i \leq \bm y$, i.e. all samples in $\Omega(\bm S_i, T_\ell)$ comprise support values greater than or equal to $S_i$.  
    Therefore, for every $G \in \mathcal{G}(\Omega(\bm S_i, T_\ell), \alpha), E[G] > \mu$ since $G$ must assign mass exceeding $\alpha$ to samples in $\Omega(\bm S_i, T_\ell)$, which implies that the mass assigned to support values $S_i$ or higher must exceed $\sqrt[n]{\alpha}$. Thus, every sequence of distributions from $\mathcal{G}(\Omega(\bm S_i, T_\ell), \alpha)$ must converge in mean to some value no less than $\mu$ so that, by Proposition~\ref{lem:closure_mean}, $\min\{E[F]: F \in \mathcal{F}(\Omega(\bm S_i, T_\ell), \alpha)\} \geq \mu$. 
    On the other hand, there exists a $j \in \mathbb{N}$ for which we can construct a sequence of distributions $G_j, G_{j+1} \ldots \in \mathcal{G}(\Omega(\bm S_i, T_\ell), \alpha)$ such that for each $i \geq j$, $G_i(S_{\min}) = (1 - \sqrt[n]{\alpha}) - 1/i$ and $G_i(S_i) = \sqrt[n]{\alpha} + 1/i$. In this case we have that $\forall \delta > 0, \exists N \in \mathbb{N}: \forall n > N, E[G_n] - \mu < \delta$, which means that the sequence converges in mean to $\mu$. Again by Proposition~\ref{lem:closure_mean}, this implies that there exists a distribution in $\mathcal{F}(\Omega(\bm S_i, T_\ell), \alpha)$ with mean $\mu$ so that $\min\{E[F]: F \in \mathcal{F}(\Omega(\bm S_i, T_\ell), \alpha)\} \leq \mu$.
\end{proof}

\begin{lemma}
\label{lem:lexi_high_admis}  
    Bound $B_{T_h}^*$ is conditionally optimal at the $1-\alpha$ level with respect to $\mathcal{F}$ only if $\forall S_i \in S$, $i \neq 0$, 
    \begin{equation}
        \label{eq:high_lexi_int}
        S_{\min} \sqrt[n]{1-\alpha} + S_i (1 - \sqrt[n]{1-\alpha}) \leq B_{T_h}^*(\bm S_i) \leq S_{\min} \sqrt[n]{1-\alpha} + S_{i+1} (1 - \sqrt[n]{1-\alpha}).
    \end{equation}
\end{lemma}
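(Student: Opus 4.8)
The plan is to evaluate $B_{T_h}^*(\bm S_i)$, which by Theorem~\ref{thm:conditional_optimality} equals $\min\{E[F] : F \in \mathcal{F}(\Omega(\bm S_i, T_h), \alpha)\}$, by first pinning down the upper set $\Omega(\bm S_i, T_h)$ and the membership condition for $\mathcal{G}(\Omega(\bm S_i, T_h), \alpha)$, then extracting the upper bound from an explicit convergent family of distributions and the lower bound by reducing $E[G]$, for an arbitrary $G \in \mathcal{G}(\Omega(\bm S_i, T_h), \alpha)$, to a two-parameter expression. Write $\rho = \sqrt[n]{1-\alpha}$; I assume $i+1 \le m-1$ so that $S_{i+1}\in S$ (the degenerate case $S_i = S_{\max}$ has $\Omega(\bm S_i, T_h) = \{\bm S_i\}$ by Proposition~\ref{prop:lexi_high_equiv}, and the lower bound there follows from the optimal-bound formula for homogeneous samples together with $\sqrt[n]{\alpha} + \rho \ge 1$). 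By Proposition~\ref{prop:lexi_high_equiv}, $\Omega(\bm S_i, T_h) = \{\bm S_i\} \cup \{\bm y \in \Omega : y_{(n)} > S_i\}$, a disjoint union since $\bm S_i$ has largest component $S_i$. For a distribution $G$ on $S$, let $q$ be the $G$-mass of $\{s \in S : s \le S_i\}$ and $p = G(\{S_i\})$, so $0 \le p \le q \le 1$; drawing $n$ points independently, the sample equals $\bm S_i$ with probability $p^n$ and satisfies $y_{(n)} > S_i$ with probability $1 - q^n$, hence $P_G[\Omega(\bm S_i, T_h)] = p^n + 1 - q^n$ and
\begin{equation}
    G \in \mathcal{G}(\Omega(\bm S_i, T_h), \alpha) \Leftrightarrow q^n - p^n < 1 - \alpha = \rho^n .
\end{equation}

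For the upper bound, I would take, for $j$ large, the distribution $G_j$ placing mass $\rho - 1/j$ at $S_{\min}$ and mass $1 - \rho + 1/j$ at $S_{i+1}$; then $q = \rho - 1/j$ and $p = 0$ (using $i \ne 0$ and $i \ne i+1$), so $q^n - p^n < \rho^n$ and $G_j \in \mathcal{G}(\Omega(\bm S_i, T_h), \alpha)$. Since $E[G_j] = (\rho - 1/j)S_{\min} + (1 - \rho + 1/j)S_{i+1} \to \rho S_{\min} + (1-\rho)S_{i+1}$, Lemma~\ref{lem:closure_mean} yields an $F \in \mathcal{F}(\Omega(\bm S_i, T_h), \alpha)$ with that mean, so $B_{T_h}^*(\bm S_i) \le \rho S_{\min} + (1-\rho)S_{i+1}$, the claimed upper bound.

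For the lower bound, fix any $G \in \mathcal{G}(\Omega(\bm S_i, T_h), \alpha)$ with parameters $q, p$ as above. Every support point below $S_i$ is at least $S_{\min}$ and every support point above $S_i$ is at least $S_{i+1}$, while $G$ assigns masses $q-p$, $p$, and $1-q$ to these three groups, so
\begin{equation}
    E[G] \ge (q-p)S_{\min} + p\, S_i + (1-q)S_{i+1} =: h(q,p) .
\end{equation}
Substituting $S_{i+1} = S_i + (S_{i+1}-S_i)$ and simplifying gives the identity
\begin{equation}
    h(q,p) - \bigl(\rho S_{\min} + (1-\rho)S_i\bigr) = (S_i - S_{\min})(p + \rho - q) + (S_{i+1}-S_i)(1-q) .
\end{equation}
On the right, $S_i - S_{\min} > 0$ (since $i \ne 0$), $S_{i+1} - S_i > 0$, and $1-q \ge 0$, so it remains only to show $q \le p + \rho$; this follows from the membership condition and super-additivity of $x \mapsto x^n$ on $[0,\infty)$, namely $q^n < p^n + \rho^n \le (p+\rho)^n$, whence $q < p + \rho$. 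Thus $E[G] \ge h(q,p) \ge \rho S_{\min} + (1-\rho)S_i$ for every such $G$, and passing to the closure via Lemma~\ref{lem:closure_mean} gives $B_{T_h}^*(\bm S_i) \ge \rho S_{\min} + (1-\rho)S_i$.

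I expect the lower bound to be the delicate part: getting the reduction $E[G] \ge h(q,p)$ exactly right (that all mass outside $\{S_i\}$ and at or below $S_i$ indeed sits at support points $\ge S_{\min}$, and that the three mass totals are $q-p$, $p$, $1-q$), then recasting $h$ in the displayed nonnegative-combination form, and recognizing that the only substantive inequality needed is $q \le p + \rho$, supplied by $(p+\rho)^n \ge p^n + \rho^n$. A secondary nuisance is the boundary index $i = m-1$, where $S_{i+1}$ is undefined and which must be handled separately as indicated above.
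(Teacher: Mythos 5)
Your proof is correct, and it reaches the same extremal two-atom distributions as the paper but by a genuinely different route. The paper's proof constructs a pushforward map $\phi$ (sending support points below $S_i$ to $S_{\min}$ and those at or above $S_i$ to $S_{i+1}$), shows $\phi(A)\subseteq A$ for $A=\Omega(\bm S_i,T_h)$ to keep the transformed distribution inside $\mathcal{G}(\Omega(\bm S_i,T_h),\alpha)$, and then invokes stochastic dominance and an ``achievability'' parameterization by the mass at $S_{\min}$. You instead compute the coverage probability in closed form, $P_G[\Omega(\bm S_i,T_h)]=p^n+1-q^n$ with $p=G(\{S_i\})$ and $q=P_G[X\le S_i]$, so that membership becomes the scalar condition $q^n-p^n<1-\alpha$, and you then reduce both bounds to the single inequality $q<p+\rho$ via superadditivity of $x\mapsto x^n$. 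This makes explicit what the paper leaves implicit in its claim that ``$G_p$ is achievable iff $p<p^*$'' (which is the same superadditivity fact, $(q-p)^n\le q^n-p^n$, in disguise), and your lower-bound identity $h(q,p)-(\rho S_{\min}+(1-\rho)S_i)=(S_i-S_{\min})(p+\rho-q)+(S_{i+1}-S_i)(1-q)$ cleanly isolates why both terms are nonnegative. Your treatment of the upper bound via the explicit sequence $G_j$ is also tighter than the paper's, which only asserts that a sequence in $\mathcal{G}$ exists with means \emph{bounded above} by $E[G_{p^*-1/j}]$ rather than exhibiting one converging to the claimed value. Finally, you are right to flag the boundary case $i=m-1$, where $S_{i+1}$ is undefined; the paper's $\phi$ is ill-defined there, whereas your reduction to $\Omega(\bm S_{\max},T_h)=\{\bm S_{\max}\}$ and the pointwise-optimal formula, together with $\sqrt[n]{\alpha}+\sqrt[n]{1-\alpha}\ge 1$, handles it correctly.
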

\begin{proof}
    From Learned-Miller~\cite{learnedmiller2025admissibilityboundsmeandiscrete} Lemma 2.2, the result is true for $S_i = S_{\min}$, so we fix $\bm S_i \neq \bm S_{\min}$. By Theorem~\ref{thm:conditional_optimality} we need only show that $\min\{E[F]: F \in \mathcal{F}(\Omega(\bm S_i, T_h), \alpha)\} = \mu$ falls within the interval defined by Inequality~\ref{eq:high_lexi_int}. To that end, by Proposition~\ref{prop:lexi_low_equiv}, it will suffice to show the following two properties. On one hand, $\forall H \in \mathcal{G}(\Omega(\bm S_i, T_h), \alpha)\}$, $E[H] > S_{\min} \sqrt[n]{1-\alpha} + S_i (1 - \sqrt[n]{1-\alpha})$. And on the other hand, there exists a sequence of distributions in $\mathcal{G}(\Omega(\bm S_i, T_h), \alpha)$ whose limiting mean is bounded above by $S_{\min} \sqrt[n]{1-\alpha} + S_{i+1} (1 - \sqrt[n]{1-\alpha})$. 

    Fix $S_i \in S$, $i > 0$, and let $H \in \mathcal{G}(\Omega(\bm S_i, T_h), \alpha)$ be arbitrary. Define transformation $\phi: S \rightarrow S$ such that for each $S_k \in S$, $\phi(S_k) = S_{\min}$ if $S_k < S_i$ and $\phi(S_k) = S_{i+1}$ if $S_k \geq S_i$. Furthermore, define $G_H \in \mathcal{F}$ to be the distribution that, for each $S_k \in S$, transfers all mass from $S_k$ to $\phi(S_k)$. Note that $\forall A \in \Omega, P_{G_H}[\phi(A)] \geq P_H[A]$. Let $A = \Omega(\bm S_i, T_h)$. We next show that $P_{G_H}[A] > \alpha$ so that $G_H \in \mathcal{G}(\Omega(\bm S_i, T_h), \alpha)$. To that end, it will suffice to show that $\phi(A) \subseteq A$ so that $P_{G_H}[A] \geq P_{G_H}[\phi(A)] \geq P_H[A] > \alpha$. Take any $\bm x \in A$. If $\bm x = \bm S_i$, then $\phi(\bm x) = \bm S_{i+1} \in A$. Otherwise, by Proposition~\ref{prop:lexi_high_equiv}, $x_{(n)} \geq S_i$ and so $\phi(x_{(n)}) = S_{i+1} > S_i$, which also implies that $\phi(\bm x) \in A$. Thus, $\phi(A) \subseteq A$.

    Now define $G_H'$ as the distribution that results from $G_H$ after transferring all mass in $G_H$ from $S_{i+1}$ to $S_i$. It's clear by construction that $H$ stochastically dominates $G_H'$ so that $E[G_H'] \leq E[H]$. Notice that both $G_H$ and $G_H'$ can be reparameterized by $p$, where fraction $p$ of the mass in the distribution is placed at $S_{\min}$ and the rest is placed at the other atom in their support. 
    We say that $G_p$ and $G_p'$ are \emph{achievable} if $\exists H \in \mathcal{G}(\Omega(\bm S_i, T_h), \alpha)$ such that $G_H = G_p$.
    Under this parameterization we have $E[G_p] = p S_{\min} + (1-p)S_{i+1}$ and $E[G_p'] = p S_{\min} + (1-p)S_i$.

    Finally, define $p^* = \sqrt[n]{1-\alpha}$. Notice that the only sample from $\phi(\Omega)$ not in $\Omega(\bm S_i, T_h)$ is $\bm S_{\min}$. Thus, the pair $G_p'$ and $G_p$ is achievable iff we have that $p < p^*$. We have argued that for every $H \in \mathcal{G}(\Omega(\bm S_i, T_h), \alpha)$ there exists an achievable $G_p'$ such that $E[G_p'] < E[H]$. Therefore, $\forall H \in \mathcal{G}(\Omega(\bm S_i, T_h), \alpha), E[G_{p^*}'] < E[H]$. Now define the sequence of distributions $G_{p^* - 1/j}$ for each $j > \lceil 1/p^* \rceil$. By construction, each element in the sequence is achievable and the sequence converges to $G_{p^*}$. We have also argued that corresponding to each $G_{p^* - 1/j}$ there exists $H \in \mathcal{G}(\Omega(\bm S_i, T_h), \alpha)$ such that $E[H] \leq E[G_{p^*-1/j}]$. Therefore, there exists a sequence of distributions in $\mathcal{G}(\Omega(\bm S_i, T_h), \alpha)$ whose means are bounded from above by the means of the sequence of distributions $G_{p^* - 1/j}$. 
\end{proof}

\section{Refinements of $\mathcal{F}$}

\begin{definition}
    For any $G, G' \in \mathcal{F}$ and $C \subseteq S$, we say that $G$ and $G'$ agree \emph{pointwise} on $C$ if $\forall x \in C, P_G[X = x] = P_{G'}[X = x]$ and \emph{cumulatively} on $C$ if $\forall x \in C, P_G[\X \leq \x] = P_{G'}[\X \leq \x]$.
\end{definition}

\begin{definition}
    For arbitrary $C \in S$ define the \emph{refinement} $\mathcal{F}_C = \{F \in \mathcal{F}: \forall S_i \in S \setminus C, P_F[S_i] = 0\}$. For arbitrary $\Omega' \subseteq \Omega$, we further define refinements
    \begin{equation}
        \mathcal{G}_C(\Omega', \alpha) = \mathcal{F}_C \cap \mathcal{G}(\Omega', \alpha),
    \end{equation}
    and
    \begin{equation}
        \mathcal{F}_C(\Omega', \alpha) = \texttt{cl}(\mathcal{G}_C(\Omega', \alpha)).
    \end{equation}
\end{definition} 

\begin{definition}
\label{def:augmentation}
    For each $C \subseteq S$, we denote by $C^+$ the augmentation 
    \begin{equation}
        C^+ = C \cup \{S_{\min}\} \cup \{S_{i+1}: S_i \in C, S_i \neq S_{\max}\}.
    \end{equation}
\end{definition}

\begin{lemma}
\label{lem:support_refinement}
    Fix $\x \in \Omega$, $\alpha \in [0,1)$, and total preorder $R$. Suppose that there exists a set $C \subseteq S$ such that for any $G, H \in \mathcal{F}$ agreeing cumulatively and pointwise on $C$, $P_G[\Omega(\x, R)] = P_H[\Omega(\x, R)]$. Then we have that
    \begin{equation}
        B_R^*(\x) = \min\{E[F]: F \in \mathcal{F}_{C^+}(\Omega(\bm x, R), \alpha)\}.
    \end{equation}
\end{lemma}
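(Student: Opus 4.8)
The plan is to show the two sides are equal by establishing inequalities in both directions, exploiting the fact that $\mathcal{F}_{C^+}(\Omega(\x,R),\alpha) \subseteq \mathcal{F}(\Omega(\x,R),\alpha)$ trivially gives one direction, and constructing a mean-nonincreasing, $\Omega(\x,R)$-probability-preserving map into $\mathcal{F}_{C^+}$ for the other. Since $\mathcal{F}_{C^+} \subseteq \mathcal{F}$, we have $\mathcal{G}_{C^+}(\Omega(\x,R),\alpha) \subseteq \mathcal{G}(\Omega(\x,R),\alpha)$ and hence $\mathcal{F}_{C^+}(\Omega(\x,R),\alpha) \subseteq \mathcal{F}(\Omega(\x,R),\alpha)$; by fact (3)-style reasoning from Theorem~\ref{thm:sandwich}, $B_R^*(\x) = \min\{E[F]: F \in \mathcal{F}(\Omega(\x,R),\alpha)\} \leq \min\{E[F]: F \in \mathcal{F}_{C^+}(\Omega(\x,R),\alpha)\}$. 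So the work is entirely in the reverse inequality.

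For the reverse direction, I would first reduce to $\mathcal{G}$: by Lemma~\ref{lem:closure_mean} it suffices to show that for every $G \in \mathcal{G}(\Omega(\x,R),\alpha)$ there is a distribution $G' \in \mathcal{F}_{C^+}$ with $E[G'] \leq E[G]$ and $P_{G'}[\Omega(\x,R)] > \alpha$ (or a sequence in $\mathcal{G}_{C^+}(\Omega(\x,R),\alpha)$ converging in mean to something at most $E[G]$). The natural construction is the ``downshift onto $C^+$'' map: define $\psi: S \to C^+$ sending each $S_k \notin C$ to the largest element of $C^+$ that is $\leq S_k$ --- concretely, if $S_k < \min(C^+)$ then $\psi(S_k) = S_{\min}$, and otherwise $\psi(S_k) = S_j$ where $S_j$ is the nearest point of $C^+$ at or below $S_k$; points of $C$ are fixed. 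Let $G'$ be the pushforward of $G$ under $\psi$. Then $E[G'] \leq E[G]$ since $\psi$ never increases a support value, and $G' \in \mathcal{F}_{C^+}$ by construction. The crux is showing $P_{G'}[\Omega(\x,R)] \geq P_G[\Omega(\x,R)] > \alpha$, and for this I would argue that $G'$ and $G$ agree cumulatively and pointwise on $C$ --- this is where the definition of $C^+$ (including $S_{i+1}$ for each $S_i \in C$) and of the augmentation is used: moving mass from $S_k$ down to its nearest $C^+$-floor does not change $P[\X \leq \x]$ for $\x \in C$ (mass only moves within the interval between consecutive $C$-points, and the endpoints land on elements of $C^+$ that sit at or below the next $C$-point), nor the pointwise probabilities on $C$ itself. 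Having established the cumulative-and-pointwise agreement on $C$, the hypothesis of the lemma immediately gives $P_{G'}[\Omega(\x,R)] = P_G[\Omega(\x,R)]$, so $G' \in \mathcal{G}_{C^+}(\Omega(\x,R),\alpha)$.

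The main obstacle I anticipate is verifying carefully that the pushforward $G'$ really does agree cumulatively with $G$ on all of $C$ --- in particular handling the boundary cases $S_{\max}$ and the block below $\min(C^+)$, and confirming that the role of including $S_{i+1}$ in $C^+$ is exactly what prevents mass that was above some $S_i \in C$ from being dragged to or below $S_i$ (which would corrupt the cumulative value at $S_i$). Once the agreement claim is pinned down, the rest is bookkeeping: combine $E[G'] \leq E[G]$ with $G' \in \mathcal{G}_{C^+}(\Omega(\x,R),\alpha)$, take infima over $G$, invoke Lemma~\ref{lem:closure_mean} (and Lemma~\ref{lem:mean_eps_close}) to pass to the closure on both sides, and conclude $\min\{E[F]: F \in \mathcal{F}_{C^+}(\Omega(\x,R),\alpha)\} \leq B_R^*(\x)$, which together with the first paragraph yields equality.
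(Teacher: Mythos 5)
Your proposal is correct and follows essentially the same route as the paper: one direction comes from the inclusion $\mathcal{F}_{C^+}(\Omega(\x, R), \alpha) \subseteq \mathcal{F}(\Omega(\x, R), \alpha)$, and the other from mapping each $G \in \mathcal{G}(\Omega(\x, R), \alpha)$ to a downshifted distribution supported on $C^+$ that agrees with $G$ pointwise and cumulatively on $C$ (hence, by the hypothesis, preserves $P[\Omega(\x, R)]$) while not increasing the mean, followed by the limiting argument via Lemma~\ref{lem:closure_mean}. Your pushforward $\psi$ is exactly the paper's explicit construction of $H$ (mass in each gap between consecutive points of $C$ is sent to the immediate successor in $S$ of the lower endpoint, and mass below $\min(C)$ is sent to $S_{\min}$), just phrased as a map.
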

\begin{proof}
    From Theorem~\ref{thm:preorder_cond_opt} and Lemma~\ref{lem:closure_mean} we have: 
    \begin{enumerate}
        \item For every sequence $\{G_i\}_{i>0}$, $G_i \in \mathcal{G}(\Omega(\x, R), \alpha)$, $\lim_{i \rightarrow \infty} E[G_i] \geq B_R^*(\x)$.
        \item There exists a sequence of distributions $\{G_i^*\}_{i>0}$, $G_i^* \in \mathcal{G}(\Omega(\x, R), \alpha)$, such that $\lim_{i \rightarrow \infty} E[G_i^*] = B_R^*(\x)$ 
    \end{enumerate}
    We will first show that for every $G \in \mathcal{G}(\Omega(\x, R), \alpha)$, there exists an $H \in \mathcal{G}_{C^+}(\Omega(\x, R), \alpha)$ such that $E[H] \leq E[G]$. 
    This establishes that, for the sequence $\{G_i^*\}_{i>0}$ in particular, there exists corresponding sequence $\{H_i^*\}_{i>0}$, $H_i^* \in \mathcal{G}_{C^+}(\Omega(\x, R), \alpha)$, such that $\lim_{i \rightarrow \infty} E[H_i^*] \leq \lim_{i \rightarrow \infty} E[G_i^*]$, which would imply by Property 2 that $\lim_{i \rightarrow \infty} E[H_i^*] \leq B_R^*(\x)$. On the other hand, since $\mathcal{G}_{C^+}(\Omega(\x, R), \alpha) \subseteq \mathcal{G}(\Omega(\x, R), \alpha)$, Property 1 establishes that $\lim_{i \rightarrow \infty} E[H_i^*] \geq B_R^*(\x)$. Thus, it will follow that the distribution with lowest mean in $\mathcal{F}(\Omega(\bm x, R), \alpha)$ is also in $\mathcal{F}_{C^+}(\Omega(\bm x, R), \alpha)$.
    
    Fix $G \in \mathcal{G}(\Omega(\x, R), \alpha)$ and construct distribution $H$ as follows. 
    Let $k = |C|$, $s_j$ denote the $j$th largest element in $C$, and $s_j'$ be the smallest element in $S$ such that $s_j < s_j'$. $H$ agrees both pointwise and cumulatively with $G$ on $C$ and
    \begin{enumerate}
        \item If $s_1 > S_{\min}$, then $H(S_{\min}) = P_{G}[X < s_1]$.
        \item If $s_k < S_{\max}$, then $H(s_k') = 1 - P_{G}[X \leq s_k]$.
        \item $\forall j \in \{1, \ldots, k-1\}$, if $s_j' \neq s_{j+1}$, then $H(s_j') = P_{G}[X < s_{j+1}] - P_{G}[X \leq s_j]$.
    \end{enumerate}
    In words, Property 3 states that $H(s_j')$ is equal to the mass that $G$ assigns to support points above $s_j$ but below $s_{j+1}$. Because $G$ and $H$ agree pointwise and cumulatively on $C$, we have by assumption that $P_{G}[\Omega(\x, R)] = P_{H}[\Omega(\x, R)]$, which implies that $H \in \mathcal{G}_{C^+}(\Omega(\bm x, R), \alpha)$.  
    And by construction we also have that $E[H] \leq E[G]$. 
\end{proof}

\begin{theorem}
\label{thm:quantile_bound_restriction}
    Let $C = \{x_{(i)}\}$. For fixed $\x \in \Omega$ and $\alpha \in [0,1)$, 
    \begin{equation}
        B_{R_i}^*(\x) = \min\{E[F]: F \in \mathcal{F}_{C^+}(\Omega(\bm x, R_i), \alpha)\}.
    \end{equation}
\end{theorem}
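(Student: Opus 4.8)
The plan is to deduce this directly from Lemma~\ref{lem:support_refinement}, applied with the singleton set $C = \{x_{(i)}\}$. Since $x_{(i)}$ is a coordinate of $\x \in \Omega$, it lies in $S$, so $C \subseteq S$ as the lemma requires, and $R_i$ is a preorder with associated upper set $\Omega(\x, R_i)$ in the sense of Definition~\ref{def:pre_upper}. Thus the entire task reduces to verifying the hypothesis of Lemma~\ref{lem:support_refinement}: that whenever $G, H \in \mathcal{F}$ agree both pointwise and cumulatively on $C$, we have $P_G[\Omega(\x, R_i)] = P_H[\Omega(\x, R_i)]$. Once this is checked, the conclusion $B_{R_i}^*(\x) = \min\{E[F] : F \in \mathcal{F}_{C^+}(\Omega(\bm x, R_i), \alpha)\}$ is exactly what the lemma outputs.

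First I would unpack $\Omega(\x, R_i)$. By Definition~\ref{def:quantile_order}, $\Omega(\x, R_i) = \{\y \in \Omega : x_{(i)} \leq y_{(i)}\}$. For an i.i.d.\ sample $\Y = (Y_1, \dots, Y_n)$ drawn from a distribution $F$, the $i$th order statistic satisfies $Y_{(i)} \geq x_{(i)}$ precisely when at most $i-1$ of the coordinates $Y_j$ are strictly below $x_{(i)}$: if $i$ or more coordinates were below $x_{(i)}$, then the $i$th smallest would be below $x_{(i)}$, and conversely if at most $i-1$ are below $x_{(i)}$ then at least $n-i+1$ coordinates are $\geq x_{(i)}$, forcing the $i$th smallest to be $\geq x_{(i)}$. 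This event is invariant under permutation of coordinates, so it is well-defined on $\Omega$ under the stated convention identifying samples with their sorted representatives. Writing $q_F = P_F[X < x_{(i)}]$, a binomial count over the $n$ coordinates then gives
\begin{equation}
  P_F[\Omega(\x, R_i)] = \sum_{k=0}^{i-1} \binom{n}{k} q_F^{k} (1 - q_F)^{n-k},
\end{equation}
so that $P_F[\Omega(\x, R_i)]$ depends on $F$ only through the single number $q_F$.

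Next I would observe that agreement on $C = \{x_{(i)}\}$ pins down $q_F$. If $G$ and $H$ agree cumulatively on $C$ then $P_G[X \leq x_{(i)}] = P_H[X \leq x_{(i)}]$, and if they agree pointwise on $C$ then $P_G[X = x_{(i)}] = P_H[X = x_{(i)}]$; subtracting yields $q_G = P_G[X < x_{(i)}] = P_H[X < x_{(i)}] = q_H$. By the displayed formula this forces $P_G[\Omega(\x, R_i)] = P_H[\Omega(\x, R_i)]$, which is the hypothesis of Lemma~\ref{lem:support_refinement}. Applying that lemma with this $C$ completes the proof; one may also note that $C^+$ then contains at most three support points, namely $x_{(i)}$, $S_{\min}$, and the successor of $x_{(i)}$, which is what makes the resulting minimization tractable.

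The only genuinely delicate step — and the one I would write out most carefully — is the reduction of the event $\{\y : y_{(i)} \geq x_{(i)}\}$ to a count of coordinates lying below $x_{(i)}$, together with confirming that the resulting probability is truly a function of $q_F$ alone and is insensitive to the finer shape of $F$ elsewhere (and in particular to how the mass at and above $x_{(i)}$ is distributed). Everything after that is bookkeeping with the definitions of pointwise and cumulative agreement.
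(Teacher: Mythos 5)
Your proposal is correct and follows the same route as the paper: both reduce the theorem to verifying the hypothesis of Lemma~\ref{lem:support_refinement} for $C = \{x_{(i)}\}$, i.e.\ that $P_G[x_{(i)} \leq Y_{(i)}] = P_H[x_{(i)} \leq Y_{(i)}]$ under pointwise and cumulative agreement at $x_{(i)}$. The only difference is that you spell out the binomial-count argument showing this probability depends on $F$ only through $P_F[X < x_{(i)}]$ (and correctly note that pinning this down requires both pointwise and cumulative agreement), a step the paper asserts without elaboration.
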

\begin{proof}
    From Lemma~\ref{lem:support_refinement} it will suffice to show that for any two distributions $G, H \in \mathcal{F}$ agreeing cumulatively and pointwise on $C$, $P_G[\Omega(\x, R_i)] = P_H[\Omega(\x, R_i)]$. By Definition~\ref{def:quantile_order} and the fact that $H$ agrees cumulatively with $G$ at $x_{(i)} \in C$ we have
    \begin{equation}
    \begin{array}{rcl}
        P_G[\Omega(\x, R_i)] &=& P_G[\x \lesssim_{R_i} \Y]\\ 
        &=& P_G[x_{(i)} \leq Y_{(i)}]\\ 
        &=& P_H[x_{(i)} \leq Y_{(i)}]\\ 
        &=& P_H[\x \lesssim_{R_i} \Y]\\
        &=& P_H[\Omega(\x, R_i)]
    \end{array}.
    \end{equation}
\end{proof}

\begin{theorem}
    Let $c = (S_{\max} - S_{\min}) / (m-1)$. For each $\x \in \Omega$, $\alpha \in [0,1)$, and $\epsilon > 0$, there exists a polynomial-time $(c+\epsilon)$-approximation for calculating $B_{R_i}^*(\x)$.
\end{theorem}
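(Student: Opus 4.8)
The plan is to collapse the optimization defining $B_{R_i}^*(\x)$ to a one-parameter problem using the refinement of Theorem~\ref{thm:quantile_bound_restriction}, read off a closed form for $B_{R_i}^*(\x)$ as an affine function of the root of a monotone univariate polynomial equation, and then approximate that root by bisection.

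First I would apply Theorem~\ref{thm:quantile_bound_restriction} with $C = \{x_{(i)}\}$ to obtain $B_{R_i}^*(\x) = \min\{E[F] : F \in \mathcal{F}_{C^+}(\Omega(\x, R_i), \alpha)\}$. By Definition~\ref{def:augmentation}, $C^+$ consists of at most the three support points $S_{\min}$, $x_{(i)}$, and the successor $x_{(i)} + c$ of $x_{(i)}$ in $S$ (it omits $x_{(i)} + c$ when $x_{(i)} = S_{\max}$, and reduces to $\{S_{\min}, S_{\min}+c\}$ when $x_{(i)} = S_{\min}$). Parametrize a distribution $F$ supported on $C^+$ by the masses $(a, b, d)$ it places at $S_{\min}$, $x_{(i)}$, $x_{(i)}+c$ respectively. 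The crucial observation is that, by Definition~\ref{def:quantile_order}, a sample lies in $\Omega(\x, R_i)$ exactly when its $i$th order statistic is at least $x_{(i)}$; for a size-$n$ sample drawn from $F$ this happens iff at most $i-1$ of the $n$ draws lie strictly below $x_{(i)}$, and for $F$ supported on $C^+$ the only support point below $x_{(i)}$ is $S_{\min}$, so
\begin{equation}
    P_F[\Omega(\x, R_i)] \;=\; \Psi_i(a) \;:=\; \sum_{j=0}^{i-1} \binom{n}{j} a^{j}(1-a)^{n-j},
\end{equation}
the lower tail of a binomial$(n,a)$, which is continuous and strictly decreasing on $[0,1]$ from $\Psi_i(0) = 1$ to $\Psi_i(1) = 0$. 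Hence $\mathcal{F}_{C^+}(\Omega(\x, R_i), \alpha)$ is precisely the compact set of mass triples $(a,b,d)$ with $a \le a^*$, where $a^*$ is the unique root of $\Psi_i(a) = \alpha$ in $[0,1]$ (with $a^* = 1$ in the degenerate case $\alpha = 0$). Since $E[F] = x_{(i)} - a\,(x_{(i)} - S_{\min}) + d\,c$ and $S_{\min} \le x_{(i)}$, this minimum is attained at $(a^*, 1-a^*, 0)$, giving
\begin{equation}
    B_{R_i}^*(\x) \;=\; a^* S_{\min} + (1 - a^*)\, x_{(i)}
\end{equation}
(and in the subcase $x_{(i)} = S_{\min}$ the membership constraint is vacuous and $B_{R_i}^*(\x) = S_{\min}$, still consistent with this expression).

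It remains to approximate $a^*$. The function $\Psi_i$ is explicit, strictly monotone, and exactly evaluable at a rational point $a = p/q$ in time polynomial in $n$ and $\log q$ (its value is a ratio of integers with $O(n\log q)$ bits, being a sum of $i \le n$ binomial terms). I would therefore run a bisection on $[0,1]$, maintaining a bracket $[\ell_t, u_t]$ with $\Psi_i(\ell_t) \ge \alpha \ge \Psi_i(u_t)$ (initially $[0,1]$): each step evaluates $\Psi_i$ at the midpoint, compares it exactly with $\alpha$, and retains the half still bracketing $a^*$. After $t = \lceil \log_2\!\big((S_{\max} - S_{\min})/\epsilon\big)\rceil$ steps the bracket has width $2^{-t}$, so $a^* \le u_t \le a^* + 2^{-t}$, and I would output
\begin{equation}
    v \;=\; u_t\, S_{\min} + (1 - u_t)\, x_{(i)},
\end{equation}
taking the upper endpoint so that $v \le B_{R_i}^*(\x)$, i.e.\ $v$ errs on the conservative side. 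Then $B_{R_i}^*(\x) - v = (u_t - a^*)(x_{(i)} - S_{\min}) \le (S_{\max} - S_{\min})\,2^{-t} \le \epsilon$, so $v$ is in fact an $\epsilon$-approximation --- hence a fortiori a $(c+\epsilon)$-approximation --- produced in time polynomial in $n$, $\log(1/\epsilon)$, and the bit-length of the input.

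The step I expect to be the main obstacle is the structural reduction in the second paragraph: carefully justifying (i) that for $F$ supported on $C^+$ the probability $P_F[\Omega(\x,R_i)]$ depends only on the mass $F$ assigns to $S_{\min}$ and equals the binomial tail $\Psi_i$, and (ii) that some minimizer assigns no mass to $x_{(i)}+c$, so that the whole problem reduces to locating the single root $a^*$ (one should also confirm that the candidate $(a^*,1-a^*,0)$ lies in the closure $\mathcal{F}_{C^+}(\Omega(\x,R_i),\alpha)$ by exhibiting it as a limit of interior points). Once that is in place, strict monotonicity of $\Psi_i$, correctness and polynomial running time of the bisection, the exact rational comparison against $\alpha$, and the boundary cases $x_{(i)} \in \{S_{\min}, S_{\max}\}$ are all routine.
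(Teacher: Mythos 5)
Your proof is correct and follows the same overall strategy as the paper --- reduce to the three-point support $C^+$ via Theorem~\ref{thm:quantile_bound_restriction}, observe that membership in $\mathcal{G}_{C^+}(\Omega(\x,R_i),\alpha)$ is governed by a monotone binomial tail in a single mass parameter, and locate the critical parameter by bisection --- but your middle step is genuinely different and yields a stronger conclusion. The paper does not argue that the minimizer places no mass at $x_{(i)}'$; instead, for each $G$ in the refined set it constructs an $H_p$ by transferring the mass at $x_{(i)}'$ down to $x_{(i)}$ and accepts the resulting additive error $E[G]-E[H_p]\le c$ before binary-searching over the two-point family $H_p$. That transfer is precisely where the $c$ in the stated approximation guarantee comes from. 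You instead write $E[F]=x_{(i)}-a\,(x_{(i)}-S_{\min})+d\,c$ and note that, since $P_F[\Omega(\x,R_i)]$ depends only on $a$, the minimum is attained at $d=0$, $a=a^*$; this eliminates the loss of $c$ entirely and gives the closed form $B_{R_i}^*(\x)=a^*S_{\min}+(1-a^*)x_{(i)}$ with $a^*$ the root of $\Psi_i(a)=\alpha$. Your bisection then delivers an $\epsilon$-approximation, which is a fortiori the claimed $(c+\epsilon)$-approximation; it also sidesteps the paper's tolerance $\delta=\epsilon/x_{(i)}$ (problematic when $x_{(i)}\le 0$) by bracketing $a^*$ directly and multiplying the bracket width by $x_{(i)}-S_{\min}\le S_{\max}-S_{\min}$. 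The points you flag as needing care are the right ones and all go through: $\Psi_i(a)=P[\mathrm{Bin}(n,a)\le i-1]$ is the correct tail (the event is that at most $i-1$ draws fall strictly below $x_{(i)}$, and $S_{\min}$ is the only such support point), it is continuous and strictly decreasing on $[0,1]$, and $(a^*,1-a^*,0)$ is the limit of the interior points $(a^*-1/j,\,1-a^*+1/j,\,0)$, hence lies in the closure $\mathcal{F}_{C^+}(\Omega(\x,R_i),\alpha)$.
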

\begin{proof}
    Fix $\x \in \Omega$. Let $C = \{x_{(i)}\}$, $C^- = \{S_{\min}, x_{(i)}\}$, and, by Definition~\ref{def:augmentation}, we have $C^+ = \{S_{\min}, x_{(i)}, x_{(i)}'\}$, where $x_{(i)}'$ is $S_{\max}$ if $x_{(i)} = S_{\max}$ or otherwise the least element of $S$ that is greater than $x_{(i)}$.    Theorem~\ref{thm:quantile_bound_restriction} establishes that there exists a distribution $F^* \in \mathcal{F}_{C^+}(\Omega(\x, R_i), \alpha)$ whose mean achieves $B_{R_i}^*(\bm x)$. And by Lemma~\ref{lem:mean_eps_close}, we also have that $\forall \varepsilon > 0, \exists G^* \in \mathcal{G}_{C^+}(\Omega(\x, R_i), \alpha): |E[F^*] - E[G^*]| < \varepsilon$. 
    
    Corresponding to each $G \in \mathcal{G}_{C^+}(\Omega(\x, R_i), \alpha)$ is a distribution $H_p \in \mathcal{F}_{C^-}$ where $H_p$ is equal to $G$, except that all mass at $G(x_{(i)}')$ is transferred to $x_{(i)}$ such that $P_{H_p}[x_{(i)}] = p$ and $P_{H_p}[S_{\min}] = 1-p$. Notice that, by construction, 
    \begin{equation}
    \label{eq:g_h_gap}
    \begin{array}{rcl}
        E[G] - E[H_p] &\leq& p (x_{(i)}' - x_{(i)}) \\
        &\leq& (x_{(i)}' - x_{(i)}) \\
        &\leq& \frac{S_{\max} - S_{\min}}{m-1} \\
    \end{array}.
    \end{equation}
    Also by construction and Definition~\ref{def:quantile_order}, $P_G[x_{(i)} \leq X_{(i)}] = P_{H_p}[x_{(i)} \leq X_{(i)}]$. Therefore, $H_p \in \mathcal{G}_{C^+}(\Omega(\x, R_i), \alpha)$.
    Take $H_{p^*}$ to be the distribution in $ \mathcal{F}_{C^-}$ corresponding to $G^*$. By letting $\varepsilon \rightarrow 0$ and by Inequality~\ref{eq:g_h_gap} and the definition of $G^*$ we have 
    \begin{equation}
        |E[F^*] - E[H_{p^*}]| \leq \frac{S_{\max} - S_{\min}}{m-1},
    \end{equation}
    where $E[H_p] = p x_{(i)}$. Thus, it remains only to approximate $p^*$ to within an additive factor of $\epsilon / x_{(i)}$.

    According to Definition~\ref{def:quantile_order} we know that $\y \in \Omega(\bm x, R_i) \Leftrightarrow x_{(i)} \leq y_{(i)}$, and in order to have $x_{(i)} \leq y_{(i)}$, given support $C^-$, it must be the case that there are at least $i$ occurrences of $x_{(i)}$ in $\y$, with the remaining elements of $\y$ being $S_{\min}$. Let $V_i$ be the event that any such $\y$ is drawn. For distribution $H_p$ we have, $H_p \in \mathcal{G}_{C^+}(\Omega(\x, R_i)) \Leftrightarrow P_{H_p}[V_i] > \alpha$ where $P_{H_p}[V_i] = 1-\texttt{Bin}(n-i-1; n, p)$ and  $\texttt{Bin}$ denotes the cumulative binomial distribution. 
    $\texttt{Bin}(n-i-1; n, p)$ must be decreasing in $p$ since it gives the probability that the number of successes is \emph{limited} to $n-i-1$, which implies that $P_{H_p}[V_i]$ is increasing in $p$. Of course $E[H_p]$ is also increasing in $p$. So to approximate $p^*$ we seek the smallest $p$ such that $P_{H_p}[V_i] > \alpha$, i.e. $H_p$ remains in $\mathcal{G}_{C^+}(\Omega(\x, R_i))$. 
    We perform binary search for $p^*$ on the interval $[0, 1]$ as follows. 
    \begin{enumerate}
        \item For interval $I = [a, b]$, let $|I| = b-a$, $I^- = [a, a+|I|/2]$, and $I^+ = [b-|I|/2, b]$.
        \item Initially choose $I = [0, 1]$ and $\delta = \epsilon / x_{(i)}$.
        \item While $|I| > \delta$, take $p' = a + |I|/2$ and
        \begin{enumerate}
            \item If $P_{H_{p'}}[V_i] < \alpha$, then set $I = I^+$
            \item Otherwise set $I = I^-$.
        \end{enumerate}
        \item Return $\min(I)$.
    \end{enumerate}
    
    At every stage of the algorithm, interval $I$ contains $p^*$. Yet by the end of the algorithm, $|I| \leq \delta$. Thus, by choosing $p = \min(I)$, we can be sure that $|p - p^*| \leq \epsilon / x_{(i)}$. 

    The algorithm runs for $O(\log(x_{(i)} / \epsilon))$ steps. And in each step, $P_{H_{p'}}[V_i]$ can be calculated to within machine precision in time polynomial in $n$ and $i$ using common numerical procedures. Thus, the overall algorithm is polynomial in the inputs $n$, $i$, $x_{(i)}$, and $\epsilon^{-1}$.
\end{proof}

\begin{theorem}
    Let $C = \{S_j \in S: \exists k \in [n]: x_k = S_j\}$. For fixed $\x \in \Omega$, 
    \begin{equation}
        B_{T_\ell}^*(\x) = \min\{E[F]: F \in  \mathcal{F}_{C^+}(\Omega(\bm x, T_\ell), \alpha)\}.
    \end{equation}
\end{theorem}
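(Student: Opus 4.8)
The plan is to obtain the identity as an instance of Lemma~\ref{lem:support_refinement}, applied to the total order $T_\ell$ (which is in particular a preorder) and the set $C$ named in the statement. By that lemma it suffices to verify its hypothesis: that whenever $G, H \in \mathcal{F}$ agree cumulatively and pointwise on $C$, one has $P_G[\Omega(\x, T_\ell)] = P_H[\Omega(\x, T_\ell)]$. Thus the whole argument reduces to showing that the probability a distribution assigns to the upper set $\Omega(\x, T_\ell)$ depends on that distribution only through its pointwise masses $P_F[X = S_j]$ and cumulative masses $P_F[X \leq S_j]$ at the points $S_j \in C$.

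I would first fix notation. List the distinct values occurring in $\x$ in increasing order as $v_1 < v_2 < \cdots < v_r$, so that $C = \{v_1, \ldots, v_r\}$; let $m_j \geq 1$ be the multiplicity of $v_j$ among the coordinates of $\x$, and put $c_j = m_1 + \cdots + m_j$, so that $\x$ in sorted form is $(v_1, \ldots, v_1, v_2, \ldots, v_r)$ with each $v_\ell$ repeated $m_\ell$ times and $c_r = n$. The crux of the proof is a decomposition of $\Omega(\x, T_\ell)$ obtained by unwinding the recursion defining $L_1$ in Definition~\ref{def:lexi_low}. Using that $\x \leq_{T_\ell} \y$ holds exactly when either $\y = \x$ or, at the first coordinate $p$ at which the sorted values $y_{(p)}$ and $x_{(p)}$ differ, one has $y_{(p)} > x_{(p)}$, I would establish
\[
\Omega(\x, T_\ell) = \{\x\} \;\sqcup\; \bigsqcup_{j=1}^{r} \bigsqcup_{a=0}^{m_j - 1} B_{j,a},
\]
where $B_{j,a} \subseteq \Omega$ is the set of samples $\y$ whose coordinates consist of exactly $m_\ell$ copies of $v_\ell$ for each $\ell \in \{1, \ldots, j-1\}$ (accounting for exactly $c_{j-1}$ coordinates, all strictly below $v_j$), exactly $a$ coordinates equal to $v_j$, and the remaining $n - c_{j-1} - a \geq 1$ coordinates strictly greater than $v_j$. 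Concretely, $B_{j,a}$ is the event that the first mismatch between $\y$ and $\x$ occurs at sorted position $p = c_{j-1} + a + 1$ and is an increase; the prescription of the coordinates below $v_j$ is precisely what forces the first $p-1$ sorted coordinates of $\y$ to agree with those of $\x$ (no coordinate of $\y$ may fall strictly below $v_1$ or strictly between consecutive values of $\x$ within that prefix). Pairwise disjointness of the $B_{j,a}$, and their disjointness from $\{\x\}$, is immediate since the count of coordinates equal to $v_j$ distinguishes the cases. The part I expect to be the main obstacle is showing that the $B_{j,a}$ together with $\{\x\}$ exhaust $\Omega(\x, T_\ell)$, which requires carefully relating the recursive lexicographic comparison on sorted vectors to the order statistics of $\y$.

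Granting this decomposition, the remaining steps are routine. For any $F \in \mathcal{F}$, the multinomial probability formula yields
\[
P_F[B_{j,a}] = \binom{n}{m_1, \ldots, m_{j-1},\, a,\, n - c_{j-1} - a} \Bigl(\, \prod_{\ell=1}^{j-1} (P_F[X = v_\ell])^{m_\ell} \Bigr) (P_F[X = v_j])^{a} \bigl(1 - P_F[X \leq v_j]\bigr)^{n - c_{j-1} - a},
\]
and $P_F[\{\x\}] = \binom{n}{m_1, \ldots, m_r} \prod_{\ell=1}^{r} (P_F[X = v_\ell])^{m_\ell}$. Each of these depends on $F$ only through the pointwise masses $P_F[X = v_\ell]$, $\ell \leq r$, and the cumulative masses $P_F[X \leq v_j]$, $j \leq r$ — all of which are quantities at points of $C$. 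Summing over $j$ and $a$, the same is true of $P_F[\Omega(\x, T_\ell)]$. Hence any $G, H \in \mathcal{F}$ agreeing cumulatively and pointwise on $C$ satisfy $P_G[\Omega(\x, T_\ell)] = P_H[\Omega(\x, T_\ell)]$, the hypothesis of Lemma~\ref{lem:support_refinement} holds, and the claimed identity $B_{T_\ell}^*(\x) = \min\{E[F] : F \in \mathcal{F}_{C^+}(\Omega(\x, T_\ell), \alpha)\}$ follows immediately.
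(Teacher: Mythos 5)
Your proposal is correct, and it follows the paper's top-level strategy exactly: both reduce the theorem to Lemma~\ref{lem:support_refinement} and then verify its hypothesis, namely that $P_F[\Omega(\x, T_\ell)]$ depends on $F$ only through its pointwise and cumulative masses on $C$. Where you diverge is in how that hypothesis is verified. The paper unwinds the recursion for $L_1(\x,\Y)$ into a chain of conditional probabilities, $P_G[L_j] = P_G[x_{(j)} < Y_{(j)}] + P_G[L_{j+1} \mid x_{(j)} = Y_{(j)}]\,P_G[x_{(j)} = Y_{(j)}]$, and then invokes the appendix result (Lemma~\ref{lem:cond_agreement}), which rests on an integral representation of the joint distribution of order statistics, to show each conditional factor is the same for $G$ and $H$. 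You instead give an explicit disjoint decomposition $\Omega(\x,T_\ell) = \{\x\} \sqcup \bigsqcup_{j,a} B_{j,a}$ indexed by the location of the first sorted mismatch, and compute $P_F[B_{j,a}]$ in closed form as a multinomial probability; your decomposition is correct (the counting constraint $c_{j-1} + a + (n - c_{j-1} - a) = n$ does force the sorted prefix of $\y$ to match that of $\x$, and exhaustiveness and disjointness check out), and each term visibly depends only on $P_F[X = v_\ell]$ and $1 - P_F[X \leq v_j]$ with $v_\ell, v_j \in C$. Your route is more elementary and self-contained — it bypasses the order-statistics machinery of the appendix entirely and yields an explicit formula for $P_F[\Omega(\x,T_\ell)]$ as a bonus — but it is specific to the lexicographic structure, whereas the paper's conditional-agreement lemma is reusable for other orders built from order-statistic comparisons.
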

\begin{proof}
    From Lemma~\ref{lem:support_refinement} it will suffice to show that for any two distributions $G, H \in \mathcal{F}$ agreeing cumulatively and pointwise on $C$, $P_G[\Omega(\x, T_\ell)] = P_H[\Omega(\x, T_\ell)]$.
    According to Definition~\ref{def:lexi_low}, we have that $G \in \mathcal{G}(\Omega(\bm x, T_\ell), \alpha) \Leftrightarrow P_G[L_1(\x, \Y)] > \alpha$. But, for each $j \in [n]$ and any $\x, \y \in \Omega$, the events $(x_{(j)} < y_{(j)})$ and $ (x_{(j)} = y_{(j)})$ are mutually exclusive, so from Equation~\ref{eq:lexi_low} we have that for all $j \in [n]$
    \begin{equation}
    \label{eq:prob_lexi_greater}
    \begin{array}{l}
        P_G[L_j(\x, \Y)] = \\
        P_G[x_{(j)} < Y_{(j)}] + P_G\left[L_{j+1}(\x, \Y) ~|~ x_{(j)} = Y_{(j)}\right] P_G[x_{(j)} = Y_{(j)}]
    \end{array}.
    \end{equation}
    Therefore, applying Lemma~\ref{lem:cond_agreement}, it follows that $P_G[L_1(\x, \Y)] = P_H[L_1(\x, \Y)]$ so that 
    \begin{equation}
    \begin{array}{rcl}
        P_G[\Omega(\x, T_\ell)] &=& P_G[\x \leq_{T_\ell} \Y]\\ 
        &=& P_G[L_1(\x, \Y)]\\ 
        &=& P_H[L_1(\x, \Y)]\\ 
        &=& P_H[\x \leq_{T_\ell} \Y]\\
        &=& P_H[\Omega(\x, T_\ell)]
    \end{array}.
    \end{equation}
\end{proof}

\appendix 

\section{Supporting Probabilistic Results}

In this section we introduce some additional vector notation to ease exposition. Let $\x_{(:i)}$ and $\x_{(i:)}$ denote, respectively, the order statistics of $\x$ before and after the $i$th order statistic (inclusive), with $\x_{(1:)} = \x_{(:n)}$ denoting the entire vector $\x$ sorted in increasing order.

\begin{lemma}
\label{lem:cond_agreement}
    Let $\x \in \Omega$ be fixed and suppose that $G, G' \in \mathcal{F}$ agree both cumulatively and pointwise on $S_{\bm x}$. Then $\forall i \in [n]$ we have
    \begin{equation}
    \label{eq:cond_eq_agree}
        P_G[x_{(i)} = Y_{(i)} | \x_{(:i-1)} = \Y_{(:i-1)}] = P_{G'}[x_{(i)} = Y_{(i)} | \x_{(:i-1)} = \Y_{(:i-1)}],
    \end{equation}
    and
    \begin{equation}
    \label{eq:cond_lt_agree}
        P_G[x_{(i)} < Y_{(i)} | \x_{(:i-1)} = \Y_{(:i-1)}] = P_{G'}[x_{(i)} < Y_{(i)} | \x_{(:i-1)} = \Y_{(:i-1)}],
    \end{equation}
    where for each $H \in \mathcal{F}$, $P_H[x_{(1)} = Y_{(1)} | \x_{(:0)} = \Y_{(:0)}] \equiv P_H[x_{(1)} = Y_{(1)}]$ and $P_H[x_{(1)} < Y_{(1)} | \x_{(:0)} = \Y_{(:0)}] \equiv P_H[x_{(1)} < Y_{(1)}]$.
\end{lemma}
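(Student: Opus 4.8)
The plan is to exhibit a finite partition of the support $S$ such that (i) $G$ and $G'$ assign equal probability to every cell, and (ii) each of the events appearing in the lemma is a function of the vector of cell-counts of $\Y = (Y_1,\dots,Y_n)$. Granting both, the probabilities of the conditioning event and of its intersections with $\{Y_{(i)} = x_{(i)}\}$ and with $\{x_{(i)} < Y_{(i)}\}$ all coincide under $G$ and $G'$, and dividing by the (common) probability of the conditioning event --- when that probability is positive --- gives the two claimed equalities; when it is zero (necessarily for both $G$ and $G'$), the conditional assertions hold trivially under any fixed convention for conditioning on a null event. The case $i = 1$, in which the conditioning event is the whole sample space, is handled identically.

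For the partition, let $v_1 < v_2 < \cdots < v_k$ enumerate the distinct values of $S_{\x}$, so that every order statistic $x_{(\ell)}$ of $\x$ equals some $v_j$. Partition $S$ into the singletons $\{v_j\}$ together with the \emph{gap} cells $\{s \in S : s < v_1\}$, $\{s \in S : v_j < s < v_{j+1}\}$ for $1 \le j < k$, and $\{s \in S : s > v_k\}$; write $\rho(s)$ for the cell containing $s$ and set $\sigma(\Y) = (\rho(Y_1), \dots, \rho(Y_n))$. Since the $Y_j$ are i.i.d., the law of $\sigma(\Y)$ under a distribution $H$ is the product measure determined by the cell probabilities $P_H[\rho(X) = R]$, so it suffices to check those agree for $H \in \{G, G'\}$. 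Each singleton probability $P_H[X = v_j]$ is fixed by pointwise agreement on $S_{\x}$; and each gap probability is a signed combination of terms of the forms $P_H[X \le v_j]$ and $P_H[X = v_j]$, since $P_H[X < v_1] = P_H[X \le v_1] - P_H[X = v_1]$, $\ P_H[v_j < X < v_{j+1}] = P_H[X \le v_{j+1}] - P_H[X = v_{j+1}] - P_H[X \le v_j]$, and $P_H[X > v_k] = 1 - P_H[X \le v_k]$ --- each such term being fixed by cumulative and pointwise agreement on $S_{\x}$. Hence $\sigma(\Y)$ has the same law under $G$ and $G'$.

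It remains to see that $\{\Y_{(:i-1)} = \x_{(:i-1)}\}$, $\{Y_{(i)} = x_{(i)}\}$ and $\{x_{(i)} < Y_{(i)}\}$ are functions of $\sigma(\Y)$. Order the cells left to right; then $\rho$ is nondecreasing from $(S, \le)$ to the cell order, so the cells of the sorted sample satisfy $\rho(Y_{(1)}) \le \cdots \le \rho(Y_{(n)})$, i.e.\ they are exactly the sorted multiset of $\{\rho(Y_j)\}_{j=1}^n$; in particular $\rho(Y_{(\ell)})$ is determined by the vector of cell-counts for every $\ell$. Because each $x_{(\ell)} \in S_{\x}$, the cell $\{x_{(\ell)}\}$ is one of our singletons, so $Y_{(\ell)} = x_{(\ell)}$ holds precisely when $\rho(Y_{(\ell)})$ equals that singleton, and hence $\{\Y_{(:i-1)} = \x_{(:i-1)}\} = \bigcap_{\ell < i} \{\rho(Y_{(\ell)}) = \{x_{(\ell)}\}\}$ is a function of $\sigma(\Y)$; similarly $\{Y_{(i)} = x_{(i)}\}$ is, and $x_{(i)} < Y_{(i)}$ holds precisely when $\rho(Y_{(i)})$ lies strictly above the cell $\{x_{(i)}\}$ in the cell order, which is again determined by $\sigma(\Y)$.

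I expect the bulk of the work to be the order-statistics bookkeeping of the previous paragraph: making precise that the entire vector equality $\Y_{(:i-1)} = \x_{(:i-1)}$, and not merely scalar comparisons, can be recovered from the cell-counts alone. This is exactly where it matters that the partition places singletons at precisely the values occurring in $\x$ (and is otherwise arbitrary) and that every entry of $\x$ lies in $S_{\x}$; it also sidesteps the messy case analysis that a direct combinatorial expansion would require when $\x$ has repeated entries.
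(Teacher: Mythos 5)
Your proof is correct, and it takes a genuinely different route from the paper's. The paper invokes the integral representation of the joint pmf of discrete order statistics (Arnold et al., Thm.~3.3.1), observing that the integration region $B_H(\x_{(:i)}, i)$ depends on $H$ only through $P_H[X < x_{(j)}]$ and $P_H[X \le x_{(j)}]$ at points of $S_{\x}$, and then treats the equality event and the strict-inequality event separately (the latter by summing the representation over $y > x_{(i)}$ and collapsing the resulting bands into a single interval determined by the CDF at $x_{(i)}$), finishing with the same Bayes-quotient step you use. You instead coarsen $S$ into singletons at the values of $S_{\x}$ plus the gaps between them, check that the cell probabilities are pinned down by cumulative-plus-pointwise agreement, and show each event in the lemma is measurable with respect to the cell-count vector of $\Y$; since the components of $\Y$ are i.i.d.\ (which you should state explicitly as an assumption inherited from the sampling model), the law of that vector coincides under $G$ and $G'$ and all the relevant joint probabilities agree at once. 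Your argument is more elementary (no external order-statistics theorem), treats the ``$=$'' and ``$<$'' events uniformly rather than by two separate computations, and is more careful than the paper about the degenerate case where the conditioning event has probability zero; what it costs is the order-statistics bookkeeping you flag yourself, namely the verification that the full vector equality $\Y_{(:i-1)} = \x_{(:i-1)}$, and not just scalar comparisons, is recoverable from cell counts --- which your monotonicity-of-$\rho$ argument does establish correctly because every $x_{(\ell)}$ occupies a singleton cell.
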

\begin{proof}
First note that since $G$ and $G'$ agree cumulatively and pointwise on $S_\x$ it is also the case that $\forall x \in S_\x$, $P_G[X < x] = P_{G'}[X < x]$. It can be shown~\cite{arnold2008first}[Thm. 3.3.1] that for fixed $i \in [n]$ and any $H \in \mathcal{F}$, $P_H[\X_{(:i)} = \x_{(:i)}] = I(\x_{(:i)}, i, n)$, where 
\begin{equation}
I(\x_{(:i)}, i, n) = C(i, n) \int\limits_{B_H(\x_{(:i)}, i)} D(\uu, i, n) d \uu,
\end{equation}
$C(i, n)$ and $D(\uu, i, n)$ are known but unimportant for our purposes, and
\begin{equation}
\begin{array}{l}
    B_H(\x_{(:i)}, i) \equiv \\ 
    \{(u_{1}, \ldots, u_{i}): \forall j \leq i, u_{j} \leq u_{j+1}, P_H[X < x_{(j)}] \leq u_{j} \leq P_H[X \leq x_{(j)}] \}
\end{array}.
\end{equation}
This implies that $I(\x_{(:i)}, i, n)$ depends on $H$ only through $B_H(\x_{(:i)}, i, n)$, and only at the points in $S_{\x(:i)}$. Thus, it is clear that $B_G = B_{G'}$, and therefore
\begin{equation}
\label{eq:point_equiv_g_gp}
    P_G[\X_{(:i)} = \x_{(:i)}] = P_{G'}[\X_{(:i)} = \x_{(:i)}].
\end{equation}
The validity of Equation~\ref{eq:cond_eq_agree} follows by observing that 
\begin{equation}
\label{eq:bayes}
\arraycolsep=1.4pt\def\arraystretch{2}
\begin{array}{rcl}
    P_G[x_{(i)} = Y_{(i)} | \x_{(:i-1)} = \Y_{(:i-1)}] &=& \frac{ P_G[\x_{(:i)} = \Y_{(:i)}]}{P_G[\x_{(:i-1)} = \Y_{(:i-1)}]} \\
    &=& \frac{ P_{G'}[\x_{(:i)} = \Y_{(:i)}]}{P_{G'}[\x_{(:i-1)} = \Y_{(:i-1)}]} \\
    &=& P_{G'}[x_{(i)} = Y_{(i)} | \x_{(:i-1)} = \Y_{(:i-1)}]
\end{array}
\end{equation}
Since $\x$ can be arbitrary, it also follows from Equation~\ref{eq:point_equiv_g_gp} that
\begin{equation}
    \begin{array}{rcl}
        P_G[x_{(1)} < Y_{(1)}] &=& \sum\limits_{y \in S, y > x_{(1)}} P_G[Y_{(1)} = y] \\
        &=& \sum\limits_{y \in S, y > x_{(1)}} P_{G'}[Y_{(1)} = y] \\
        &=& P_{G'}[x_{(1)} < Y_{(1)}]
    \end{array}.
\end{equation}

Now for fixed $i \in \{2, \ldots, n\}$ and $H \in \mathcal{F}$ define $S' \subseteq S$ such that $y \in S'$ iff $x_{(i)} < y \leq S_{\max}$. Notice that, by construction, 
\begin{equation}
    \bigcup_{y \in S'} \{t: P_H[X < y] \leq t \leq P_H[X \leq y]\} = \{t: P_H[X < x_{(i)}] < t \leq 1\}.
\end{equation}
Define
\begin{equation}
    I_H'(t, \x_{(:i-1)}, i, n) = C(i, n) \int\limits_{B_H(\x_{(:i-1)}, i-1)} D(\uu_{(:i-1)} \oplus \{t\}, i, n) d \uu_{(:i-1)},
\end{equation}
where operator $\oplus$ indicates concatenation, and again the only dependence on H is through $B_H$. We have
\begin{equation}
\label{eq:partial_joint}
\arraycolsep=1.4pt\def\arraystretch{2}
\begin{array}{rcl}
    P_H[X_{(i)} > x_{(i)}, \X_{(:i-1)} = \x_{(:i-1)}] &=& \sum\limits_{y \in S'} I_H(\x_{(:i-1)} \oplus \{y\}, i, n) \\
    &=& \sum\limits_{y \in S'}  \int\limits_{B_H'(y)} I_H'(t, \x_{(:i-1)}, i, n) dt \\
    &=& \int\limits_{B_H''(x_{(i)}, t)} I_H'(t, \x_{(:i-1)}, i, n) dt \\
\end{array},
\end{equation}
where
\begin{equation}
    B_H'(y) = \{t: P_H[X < y] \leq t \leq P_H[X \leq y]\},
\end{equation}
and
\begin{equation}
    B_H''(x_{(i)}, t) = \{t: P_H[X \leq x_{(i)}] < t \leq 1\}.
\end{equation}
Ultimately the only dependence on $H$ in Equation~\ref{eq:partial_joint} is through $B_H$ and $B_H''$. And, because $G$ and $G'$ agree cumulatively on $S_{\bm x}$, it must be the case that $B_G'' = B_{G'}''$. Therefore, it is clear that 
\begin{equation}
    P_G[x_{(i)} < Y_{(i)}, \x_{(:i-1)} = \Y_{(:i-1)}] = P_{G'}[x_{(i)} < Y_{(i)} , \x_{(:i-1)} = \Y_{(:i-1)}],
\end{equation} 
and the validity of Equation~\ref{eq:cond_lt_agree} follows by similar argument as in Equation~\ref{eq:bayes}.
\end{proof}

\printbibliography

@misc{learnedmiller2025admissibilityboundsmeandiscrete,
      title={On the admissibility of bounds on the mean of discrete, scalar probability distributions from an iid sample}, 
      author={Erik Learned-Miller},
      year={2025},
      eprint={2502.17223},
      archivePrefix={arXiv},
      primaryClass={math.ST},
      url={https://arxiv.org/abs/2502.17223}, 
}

@book{arnold2008first,
  title={A first course in order statistics},
  author={Arnold, Barry C and Balakrishnan, Narayanaswamy and Nagaraja, Haikady Navada},
  year={2008},
  publisher={SIAM}
}

@article{szpilrajn1930extension,
  title={Sur l'extension de l'ordre partiel},
  author={Szpilrajn, Edward},
  journal={Fundamenta mathematicae},
  volume={16},
  number={1},
  pages={386--389},
  year={1930},
  publisher={Polska Akademia Nauk. Instytut Matematyczny PAN}
}

\end{document}